\documentclass[letterpaper,twocolumn,10pt]{article}
\usepackage{usenix}

\usepackage{tikz}
\usepackage{amsmath}

\usepackage{graphicx} % Required for inserting images
\usepackage{cite}

\usepackage{mathtools}
\usepackage{amsmath}
\usepackage{amssymb}
\usepackage{amsthm}
\usepackage{thm-restate}
\usepackage{algorithm, algpseudocode}
\usepackage{xspace}
\usepackage{xurl}
\usepackage[T1]{fontenc}
\usepackage{enumitem}

\newtheorem{theorem}{Theorem}

\usepackage{subcaption}
\usepackage{xcolor}

\DeclarePairedDelimiter\ceil{\lceil}{\rceil}

\usepackage{cleveref}

\newtheorem{definition}{Definition}   % independent counter
\newtheorem{lemma}{Lemma}             % independent counter

\title{\Large \bf \sys \\ Efficient, and Censorship-Resilient Signature Aggregation for Million Scale Consensus}

\begin{document}

\date{}

\newcommand{\sys}{\textsc{Wonderboom}\xspace}

\newcommand{\ray}[1]{{#1}}
\newcommand{\mxy}[1]{{#1}}
\newcommand{\kp}[1]{{#1}}
\newcommand{\za}[1]{{#1}}

\algtext*{EndIf}% Remove "end if" text
\algtext*{EndFor}% Remove "end if" text
\algtext*{EndProcedure}% Remove "end if" text
\algtext*{EndFunction}% Remove "end if" text

%for single author (just remove % characters)
\author{
{\rm Anonymous}\\
Submission
%\and
%{\rm Second Name}\\
%Second Institution
% copy the following lines to add more authors
% \and
% {\rm Name}\\
%Name Institution
} % end author

\author{Zeta Avarikioti\inst{1} \and
Ray Neiheiser \inst{2} \and
Krzysztof Pietrzak \inst{2} \and
Michelle X. Yeo \inst{3}
}

\author{
{\rm Zeta Avarikioti}\\
TU Wien \& Common Prefix
\and
{\rm Ray Neiheiser}\\
Institute of Science and Technology Austria
\and
{\rm Krzysztof Pietrzak}\\
Institute of Science and Technology Austria
\and
{\rm Michelle X. Yeo}\\
Nanyang Technological University and Aarhus University
} % end author

%
%\authorrunning{F. Author et al.}
% First names are abbreviated in the running head.
% If there are more than two authors, 'et al.' is used.
%
%\institute{TU Wien \& Common Prefix \and 
%Institute of Science and Technology Austria \and
%Nanyang Technological University and Aarhus University
%}
%Springer Heidelberg, Tiergartenstr. 17, 69121 Heidelberg, Germany
%\email{lncs@springer.com}\\
%\url{http://www.springer.com/gp/computer-science/lncs} \and
%ABC Institute, Rupert-Karls-University Heidelberg, Heidelberg, Germany\\
%\email{\{abc,lncs\}@uni-heidelberg.de}}
%
\maketitle              % typeset the header of the contribution

\begin{abstract}

% Ethereum = 349b + 329b = 678 > Singapore, Mexico and Thailand
% top 25

Over the last years, Ethereum has evolved into a public platform that safeguards the savings of hundreds of millions of people and secures more than \$650 billion in assets, placing it among the top 25 stock exchanges worldwide in market capitalization, ahead of Singapore, Mexico, and Thailand.
As such, the performance and security of the Ethereum blockchain are not only of theoretical interest, but also carry significant global economic implications.

At the time of writing, the Ethereum platform is collectively secured by almost one million validators highlighting its decentralized nature and underlining its economic security guarantees. 
However, due to this large validator set, the protocol takes around 15 minutes to finalize a block which is prohibitively slow for many real world applications. This delay is largely driven by the cost of aggregating and disseminating signatures across a validator set of this scale. Furthermore, as we show in this paper, the existing protocol that is used to aggregate and disseminate the signatures has several shortcomings that can be exploited by adversaries \mxy{to shift stake proportion from honest to adversarial nodes}.

In this paper, we introduce \sys, the first million scale aggregation protocol that can efficiently aggregate the signatures of millions of validators in a single Ethereum slot (x32 faster) while offering higher security guarantees than the state of the art protocol used in Ethereum.

Furthermore, to evaluate \sys, we implement the first simulation tool that can simulate such a protocol on the million scale and show that even in the worst case \sys can aggregate and verify more than 2 million signatures within a single Ethereum slot.

\end{abstract}

\section{Introduction}
\label{sec:introduction}

After the inception of Bitcoin in 2014~\cite{bitcoin}, the term blockchain was coined to describe systems built on similar architectural and conceptual foundations. Since then, hundreds of blockchain platforms emerged, each modifying the original design to improve throughput~\cite{DBLP:conf/podc/KeidarKNS21}, reduce ecological impact~\cite{CohenP23}, or support new use cases. 

The introduction of smart contracts extended blockchain functionality by enabling arbitrary programs to run on-chain. This made it possible to create and deploy programmable assets within blockchain ecosystems, allowing users to trade, manage, and interact with these assets directly on the network.

Among these platforms, Ethereum is the largest programmable asset blockchain ecosystem, holding over \$650 billion in assets, which would place it in the top 25 stock exchanges world-wide~\cite{largeststock,ethereumvalue}. Furthermore, due to the introduction of stablecoins (i.e., tokens that aim to mirror the real world value of fiat tokens such as the USD), millions of people, especially from high inflation countries, rely on Ethereum to store their savings in foreign currencies~\cite{statista_crypto2025}.

At the time of writing, Ethereum is secured by the largest validator set among all smart contract platforms, with almost one million active validators, reinforcing its decentralized and permissionless nature. The Ethereum protocol builds on early research in Byzantine Fault Tolerant (BFT) consensus. Thus, for a transaction to be considered final, Ethereum requires that, out of $N$ validators, at least two consecutive quorums of $\frac{2N}{3}$ attestations have to be collected~\cite{pbft}.

However, due to its large validator set, Ethereum suffers from two major limitations.
First, finality is slow: on average, clients must wait 15 minutes until their transaction is considered final~\cite{beaconchain,singleslotpath}, making it unrealistic for usage in everyday payment scenarios.
Second, as validator rewards in Ethereum depend on the timely inclusion of their attestation, the protocol is vulnerable to censorship attacks during the quorum collection process. Even without compromising safety, adversaries can delay or censor honest attestations frequently enough to cause large aggregate reward losses (on the order of \$100M+ per year, as we show in Section~\ref{sec:eth-background}), gradually shifting stake distribution from honest to adversarial over time.

The primary reason behind this large finality latency comes from the high computational and bandwidth cost of aggregating and verifying $\ge \frac{2N}{3}$ signatures per block across millions of validators.
To make this feasible without significantly increasing hardware requirements for all validators, Ethereum partitions the validator set into smaller subsets of $32$ committees of $\frac{N}{32}$ validators each. In each slot, all signatures from a committee are aggregated. Since even $\frac{N}{32}$ signatures represent a substantial number at Ethereum's scale, a hierarchical tree aggregation protocol is used: each committee is further divided into up to $64$ sub-committees of at least $128$ validators, each aggregating up to $2048$ signatures, which are then gossiped to the next proposer. As a result, it takes over $32$ slots (i.e., one epoch) to verify and aggregate all $N$ validator signatures and two consecutive epochs to achieve finality~\cite{pbft,singleslotpath}.

The most effective way to balance this aggregation load is to use deeper aggregation trees~\cite{kauri}. However, this approach faces two major challenges.
First, Ethereum relies on a gossip network for message propagation between validators to protect proposers from censorship attacks, which, e.g., could be motivated by MEV opportunities.
However, disseminating messages through the gossip network takes a significant amount of time (e.g., $\approx 4$ seconds in Ethereum). 
Thus, for any given slot time, the inherent latency of the gossip protocol limits the potential depth of the aggregation tree.
For example, in the case of Ethereum with 12-second slots, we are limited to a tree of depth 2 with 3 rounds of gossip.
Second, deeper aggregation trees create more opportunities to censor validators, which may cause significant reward losses or penalties for the affected validators and may also delay finality.

This raises the primary research question of our work: 

\noindent\emph{Can we design an aggregation protocol that preserves Ethereum's decentralization (millions of validators) while achieving fast finality and strong, quantifiable guarantees against censorship?}

In this work, we answer this question in the affirmative and present \sys, a tree-based aggregation protocol that aggregates \emph{all $N$ signatures every slot}, thereby achieving two-slot finality with explicit inclusion guarantees at scale.
We model \sys as a tree with a fanout of $m$, where each node represents a committee of validators. As in Ethereum, each committee comprises 128 members, with 16 random representatives.

To quantify and show the censorship-resilience of \sys, we distinguish between two types of censorship attacks. \emph{Proposer censorship} occurs when a proposer is prevented from proposing a block, resulting in the loss of block rewards and penalties for missing the slot. \emph{Vote censorship} occurs when validators are prevented from including their vote in the next block, leading to lost rewards and penalties.

To mitigate proposer censorship, in \sys the proposer and validator processes are physically separated and have distinct IP addresses~\footnote{The validator process can be implemented as a gateway, analogous to the gateway nodes commonly used by mining pools in PoW Ethereum~\cite{gateways}.}. This separation reduces the attack surface for proposer deanonymization via the gossip layer, as studied in~\cite{heimbach2024deanonymizing} and allows us to leverage point-to-point communication channels between the validator processes in the aggregation protocol while preserving the anonymity of the proposer by retaining gossip-based communication for messages to and from block proposers.

%To combat proposer censorship, in \sys we propose physically separating the proposer from the validator process such that they have distinct IP addresses. This makes it significantly harder to deanonymize proposers on the gossip layer as in~\cite{heimbach2024deanonymizing}, as it would result in much less frequent interactions with the gossip layer which significantly reduces the success rate of this attack. Furthermore, this allows us to to eliminate the need for the gossip protocol in the aggregation protocol, while still using gossip for messages to and from block proposers.

To address vote censorship attacks, \sys adds a simple forwarding rule where the largest aggregate is combined with one uniformly random aggregate. We pair this with a once-per-$k$ reward allocation where a validator earns the reward for a $k$-slot window if it got its vote included at least once within that window. We show that this simple rule allows us to lower-bound the per-slot inclusion probability $p$ and shows that the chance of missing out on rewards is $(1-p)^k$, which declines exponentially with $k$. These choices neutralize long-term stake drift from vote censorship while preserving Ethereum's decentralization. 

We also present an analysis comparing the censorship-resilience of \sys with Ethereum and show that \sys achieves censorship resilience with a smaller $k$ parameter compared to Ethereum: 
for $k = 64$ (two epochs) in \sys, our analysis indicates that under the same adversarial assumptions and current Ethereum reward allocation rules, Ethereum would need to extend its effective penalization window to six epochs to reach a similar level of resilience.
% for $k = 64$ (two epochs) in \sys, Ethereum would need to extend its penalization window to six epochs to reach a similar level of resilience.
% Furthermore, we show that the current analysis of vote censorship in Ethereum relies on an inaccurate metric, which falsely inflates the vote censorship resilience guarantees by several orders of magnitude.
Furthermore, we show that the current analysis of vote censorship in Ethereum relies on an imprecise metric, which significantly overstates the vote censorship resilience guarantees by several orders of magnitude.
% Furthermore, we show that the standard metric used to analyze vote censorship in Ethereum fails to capture representative-level inclusion effects, thereby overstating vote-censorship resilience by several orders of magnitude.

We substantiate our theoretical analysis by implementing the first simulator for million scale consensus that can emulate the worst-case complexity (e.g., $\frac{N}{3}$ faulty nodes) and show that \sys can aggregate more than $2$ million validator signatures within a single Ethereum slot, enabling two-slot finality. Beyond performance, we formalize vote-censorship resilience at scale and obtain stronger inclusion guarantees for \sys compared to Ethereum's current design.

\smallskip\noindent{\bf Summary of Contributions:}
\begin{itemize}
    \item We present \sys, a censorship-resilient aggregation protocol that can aggregate millions of signatures in a single slot. Notably, due to its modular design, \sys can be integrated into existing large-scale systems such as Ethereum with minimal changes.
    \item We present a theoretical analysis of the vote-censorship resilience of Ethereum's current aggregation protocol, and show that \sys achieves significantly better vote-censorship resilience compared to Ethereum.
    \item We provide the first simulation framework capable of evaluating the worst-case performance of large-scale Byzantine fault tolerant signature aggregation, addressing the challenges of assessing systems at the million-validator scale such as Ethereum.
    \item We evaluated \sys using our simulation tool and demonstrated that it can aggregate $N$ signatures over \textbf{x4} faster than the Ethereum design, while providing superior censorship resilience at scale. In combination with the proposer-validator separation, this results in a total speed-up of over \textbf{x32}.
\end{itemize}

\smallskip\noindent{\bf Organization.}
The remainder of the paper is organized as follows. In~\Cref{sec:systemmodel} we model Ethereum's block building process, present our desideratum, as well as state our model assumptions. 
We then go on to describe the current signature aggregation protocol in Ethereum in~\Cref{sec:eth-background}, as well as outline the limitations and vulnerabilities of the current Ethereum aggregation protocol.
~\Cref{sec:wonderboom} details \sys, our tree-based signature aggregation protocol that can aggregate all validator signatures in a single slot.
We analyze the vote-censorship resilience of \sys against a slowly-adaptive adversary in~\Cref{sec:analysis} and show that it achieves better vote-censorship resilience than Ethereum.
In~\Cref{sec:evaluation} we present our simulation tool and the evaluation of the worst-case performance of \sys when scaling up to millions of nodes.
We then discuss further optimizations and connections of our protocol in~\Cref{sec:discussion}, discuss related work in~\Cref{sec:relatedwork}, and conclude in~\Cref{sec:conclusion}.
\section{Model}
\label{sec:systemmodel}
For $n \in \mathbb{N}$, we use $[n]$ to denote $\{1, \dots, n\}$.
The system consists of $N$ server processes $p_1, \ldots, p_N$ that are connected via perfect point-to-point channels, implemented using mechanisms for message reordering, deduplication, and retransmission. 
As we aim to provide solutions that are compatible with Ethereum where validators are penalized if they do not participate timely, we have to assume that we operate in the synchronous model. That is, we assume some globally-known and finite bound $\Delta\geq0$ such that any message sent by a process at time $t$ is guaranteed to be delivered to all other processes by time $t+\Delta$.

\ray{Note that this synchrony bound is strictly necessary for systems where slashing is used as an accountability measure~\cite{10.1145/3670865.3673548}. However, as observed in~\cite{10.1145/3670865.3673548}, the protocol may still make progress under a smaller $\delta < \Delta$.}

%As such, when a process $p_i$ sends a sequence of messages $b_i, b_{i+1}$ to a process $p_j$, $p_j$ eventually delivers $b_i$ and $b_{i+1}$ in the intended order.
We model block building as proceeding in slots. In each slot, a leader process proposes a block and then has to collect a sufficient amount of votes such that the block is considered valid and can thus be appended to the blockchain. We model a set of $r$ consecutive slots as an epoch (e.g., $r=32$ for an Ethereum epoch).

We model \sys as a black box that provides vote aggregation for any leader-based consensus protocol~$\Omega$, where in~$\Omega$ the leader invokes \sys to obtain the aggregate votes for a block~$B_i$ in slot~$i$. As long as the consensus protocol can reliably verify that a sufficient fraction of nodes contributed signatures to the aggregate (e.g., guaranteed by constructions as BLS signatures~\cite{bls,bls2}), \sys does not interfere with safety. Consequently, in \sys the main concern is ensuring liveness, in particular by preventing vote-censorship attacks that may suppress the votes of correct nodes. 

More formally, the desired property \sys aims to guarantee is \emph{vote-censorship resilience}, which is defined as follows:

\begin{definition}[Vote-censorship resilience]\label{def:er}

   A protocol is $\alpha$-vote-censorship-resilient for $\alpha=1-(1-p)^k$ based on the probability $p$ that a validator is censored in a single round and the probability $(1-p)^k$ over $k$ consecutive rounds, which declines exponentially in $k$.
    
    %Let $p$ denote the probability that an honest node's vote is not censored in a single slot.
   % A protocol $\Pi$ is $(p,k)$-vote-censorship resilient if there is some parameter $k$ such that the probability that no honest nodes get censored over $k'>k$ consecutive slots converges to~$1$.
\end{definition}

%As such, we are interested in protocols with small values for $p$ such that we can pick small $k$ and still guarantee that honest validators do not miss out on rewards due to vote-censorship attacks while also guaranteeing with small $k$ that validators participate consistently in the protocol.

As such, we aim to design a protocol with a small per-slot censorship probability $p$, such that validators are already protected from vote-censorship attacks for small values of $k$, while also reducing the potential for free-riding, i.e., nodes participating only every $k$ rounds due to lack of incentive to participate each round. We discuss free-riding in more detail in Section~\ref{sec:discussion}.

Furthermore, in~\Cref{sec:analysis}, we show that \sys allows choosing a $k$ that is three times smaller than in Ethereum, indicating that \sys is significantly more resilient to vote-censorship attacks.

%As such, we focus on protocols with low per-slot censorship probability $p$, enabling a choice of $k$ protecting honest validators from missing out on rewards due to vote-censorship attacks while limiting 

%In particular, we are interested in protocols that ensure $(p,k)$-vote-censorship resilience with small values of $k$. This implies stronger resilience to adversarial censorship attempts. In~\Cref{sec:analysis}, we show that the $k$ parameter in Ethereum is three times higher compared to \sys, implying that \sys is more resilient to vote-censorship attacks. 

%\sys operates in the partially synchronous model, where message delays may be unbounded during periods of asynchrony. However, after some unknown Global Stabilization Time (GST), messages are delivered within a known bound~$\Delta$ for long enough to enable progress. Thus, according to the FLP impossibility result~\cite{flp}, \sys can also only guarantee liveness after GST.
%\ray{Delete partial synchronous}

\smallskip\noindent{\bf Adversarial model and cryptographic assumptions.}
We assume the Byzantine fault model, where up to $f$ out of $N = 3f + 1$ nodes can be faulty and may deviate arbitrarily from the protocol. Byzantine nodes may behave arbitrarily and collude, but they are assumed to be computationally bounded, i.e., standard cryptographic primitives hold. 
%That is, they run in probabilistic polynomial time (PPT).
Furthermore, we consider a \emph{slowly adaptive} adversary: the adversary may update the set of corrupted nodes only at the beginning of each epoch, but the corruption set remains fixed throughout the epoch.  

We assume that the blockchain provides a Public Key Infrastructure (PKI) functionality, i.e., each process has a unique public-private key pair, allowing it to sign messages with its private key. The signatures can be publicly verified in the system using the corresponding public key. A validator $i$ can obtain the public key of another validator $j$ by querying the PKI for $pki[j]$.

We use Boneh-Lynn-Shacham (BLS) signatures~\cite{bls} to enable the aggregation of $N$ signatures and $N$ public keys into a single signature–public key pair that can be verified in the same time as a single signature.  
To construct the compressed public key in a distributed fashion, we require a bitvector structure that identifies which processes contributed their signature to the aggregate. This requires a system that assigns each public key a unique index in the bitvector.

%As the bitvector may become large for a high number of participants (e.g., 1 megabit for 1 million participants), we refer to compression structures such as Roaring Bitmaps~\cite{roaring} to reduce its size as discussed in Section~\ref{sec:discussion}.

\smallskip\noindent{\bf Glossary:}
Here we define terms that we use throughout our paper.

\begin{description}
\item \textbf{Validator}: A participant in the Ethereum consensus protocol responsible for attesting to (voting on) the validity of proposed blocks.
\item \textbf{Proposer}: A validator elected to build and broadcast a new block in a specific slot.
\item \textbf{Slot}: A fixed period of time (e.g., 12 seconds) during which a single block is expected to be proposed and attested.
\item \textbf{Epoch}: A period of a fixed number of consecutive slots (e.g., 32 slots), which aligns with how the protocol processes validator-set changes.
\item \textbf{Committee}: A group of validators.
\item \textbf{Vote Aggregation}: The process of combining individual validator attestations into a compact aggregate signature.
\end{description}

\section{Ethereum: Design \& Limitations}
\label{sec:eth-background}

\subsection{Background}\label{subsec:eth-mechanics}

Ethereum~\cite{ethereum} is a Proof of Stake (PoS) blockchain where participants must lock 32 ETH (over 96,000 USD at the time of writing) to become a validator. There are currently almost 1 million validators participating in the Ethereum protocol.

Ethereum proceeds in epochs of 32 slots of 12 seconds each, where in each slot the signatures of a committee of roughly $\frac{N}{32}$ validators are verified and aggregated. Each committee is further divided into 64 subcommittees of at least 128 and at most 2048 validators. Validators broadcast their attestations within their respective subcommittee, and on expectation, 16 randomly selected \emph{committee representatives} aggregate the signatures. These per-subcommittee aggregates are then gossiped to the proposer, who includes the best (largest-weight) aggregate from each committee in the block.
Thus, it takes 32 slots (one epoch) for all validators to contribute their signatures. Finality therefore requires two consecutive epochs with at least $\frac{2}{3}$ participation, resulting in an average finality latency of about 15 minutes~\cite{singleslotpath}.

\smallskip\noindent{\bf Incentives and Penalties.}
% \label{subsec:eth-incentives}
Attestation rewards are tied to timely inclusion. A missed or delayed attestation reduces the validator rewards (e.g., a one-slot delay cuts 50\% of the attestation reward), creating direct economic exposure to any mechanism that censors votes. Thus, sustained censorship redistributes the stake over time from honest to adversarial validators, degrading safety margins~\cite{penalties}. 

\smallskip\noindent{\bf Propagation and Latency Constraints.}
% \label{subsec:eth-latency}
A 12-second slot must accommodate block proposal, network dissemination, committee aggregation, and delivery to the next proposer. Measurements indicate that reaching $\approx 98\%$ of validators via gossip alone can take up to 4 seconds; with two gossip phases, this leaves only a few seconds budget for verification and aggregation, complicating single-slot aggregation on the current scale~\cite{ethereum4sresearch}.

\subsection{Vote Censorship Vulnerabilities}
\label{subsec:eth-censorship}

\smallskip\noindent{\bf Committee-driven censorship (malicious committee supermajority).}
Ethereum enforces a minimum committee size of $128$, based on the argument that with this size, the probability of any single committee containing a $2/3$ Byzantine supermajority is negligible~\cite{ethereumbookcommittees}. The probability of drawing $k$ faulty validators without replacement when sampling a committee of size $n$ from a population of size $N$ with $f$ faulty validators is hypergeometric~\cite{feller1}:
\begin{equation}\label{eq:hyper}
    \Pr[X = k] \;=\; \frac{\binom{f}{k}\binom{N-f}{n-k}}{\binom{N}{n}}
\end{equation}
Using~\Cref{eq:hyper}, the probability of any single committee of size $n$ containing at least a $2/3$ Byzantine supermajority is thus $
\Pr[X \ge 2n/3] = \sum_{k=\lceil 2n/3\rceil}^{n} \Pr[X=k]$.
For $n=128$ and $N=1{,}000{,}000$, one obtains $\Pr[X \ge 86] \approx 5.55 \times 10^{-15}$~\cite{ethereumbookcommittees}.

We argue that this is not the appropriate metric as the above bound concerns \emph{committee composition}, not \emph{committee representation}, i.e. the metric that actually matters is the probability that all 16 representatives are faulty, which is what is directly tied to inclusion and censorship. In Ethereum, each of the $128$ committee members is independently selected as a representative with probability $1/8$ (so the expected number of representatives is $16$). Let $A$ be the event ``validator is Byzantine'' with $\Pr[A]=1/3$ and $B$ the event ``validator is selected as a representative'' with $\Pr[B]=1/8$. A simple upper bound that treats selection and corruption at the node level is then given by:

\begin{align*}
&\Pr[\text{all selected representatives are faulty}] \\
&\approx \big(\Pr[A \vee \neg B]\big)^{128}\\
&= \Big(\tfrac{1}{3} + \tfrac{7}{8} - \tfrac{1}{3}\cdot\tfrac{7}{8}\Big)^{128}\\
&\approx 1.45 \times 10^{-5}
\end{align*}
This is several orders of magnitude larger than the hypergeometric tail above and reflects the risk that \emph{no honest representative is available to include honest votes}, enabling representative-driven censorship. Aggregated over $7{,}200$ slots per day and at least $64$ committees per slot, the daily probability that \emph{at least one} committee experiences such censorship is $1 - \big(1 - 1.45\times 10^{-5}\big)^{7{,}200 \cdot 64} \approx 0.9987$.
This corresponds to an expected value of over 6 committees per day. As such, representative-driven censorship events are expected on a daily basis under these parameters. Economic penalties for delayed or missed inclusion~\cite{penalties} make this gap between committee safety and inclusion safety practically consequential.

\smallskip\noindent{\bf Proposer/Leader Censorship and Economic Impact.}
\label{subsubsec:root-censorship}
The proposer/leader is a single point where censorship can have an over-proportional impact. With $f \approx N/3$ Byzantine power, the proposer is Byzantine with probability $\approx 1/3$ in any given slot~\footnote{\ray{Note that in the current Ethereum ecosystem two entities combined already own over 30\% of the total Stake~\cite{beaconchainentities}}}. Even if safety (finality) is unaffected, intermittent malicious proposers can strategically omit aggregates, causing recurring inclusion losses. A back-of-the-envelope estimate illustrates the systemic effect: if, in Ethereum, in a fraction $1/3$ of slots the proposer is malicious and can suppress up to $1/3$ of honest attestations, then honest validators suffer an expected reward loss factor of $\frac{2}{3}\;\cdot\;\frac{1}{3}\;\cdot\;\frac{1}{3} \;=\; \frac{2}{27}$ (i.e., of the $\frac{2}{3}$ fraction of honest nodes, up to $\frac{1}{3}$ are affected every $3$ rounds).
Using a representative per-attestation reward of $\$0.03$ (about $\$2{,}465$ per year under full participation)~\cite{figmentrewards}, this corresponds to roughly \$205 annualized loss per honest validator. With a conservative baseline of $1$M validators and $2/3$ honest stake, the aggregate annual damage exceeds \$136M, arising purely from inclusion failures rather than safety violations.

\smallskip\noindent{\bf Implications.}
The analysis above indicates that censorship risk in Ethereum is governed primarily by \emph{proposer selection} rather than by \emph{committee composition}. Because attestation rewards are tied to timely inclusion, repeated proposer-driven omissions may induce systematic reward loss for honest validators; over time this shifts stake toward the adversary and weakens safety margins. These considerations motivate an explicit $\alpha$-vote-censorship–resilience target (Def.~\ref{def:er}): penalize only when a validator is excluded for $k$ consecutive slots, thereby aligning incentives while bounding liveness loss. We examine the implications of different values of $k$ in Section~\ref{sec:discussion}.

We next present \sys, which achieves this goal while preserving Ethereum's decentralization and economic-safety properties. By aggregating all $N$ validator signatures each slot and employing a tree design with randomized aggregate selection, \sys enables single-slot aggregation and two-slot finality ($\approx 24$ seconds), in contrast to the current $\approx 15$ minutes finality in Ethereum.

\section{\sys}
\label{sec:wonderboom}

In this section, we introduce \sys, a tree-based aggregation protocol that can handle signatures from millions of validators within the duration of a single Ethereum slot, while simultaneously ensuring censorship resilience. We adopt a structure similar to Ethereum by dividing validators into committees of 128 validators, each with 16 random committee representatives. Compared to Ethereum, \sys constructs a deeper tree in which every node in the tree corresponds to a committee of 128 validators.
The difficulty of constructing such a protocol is twofold. First, it requires designing a protocol that is provably resilient against censorship attacks. Second, we need it to be able to identify tree structures that can efficiently aggregate this large number of signatures.
In the following, we detail the structure and guarantees of \sys.

\begin{figure*}[ht!]
    \centering
    \includegraphics[width=0.7\linewidth]{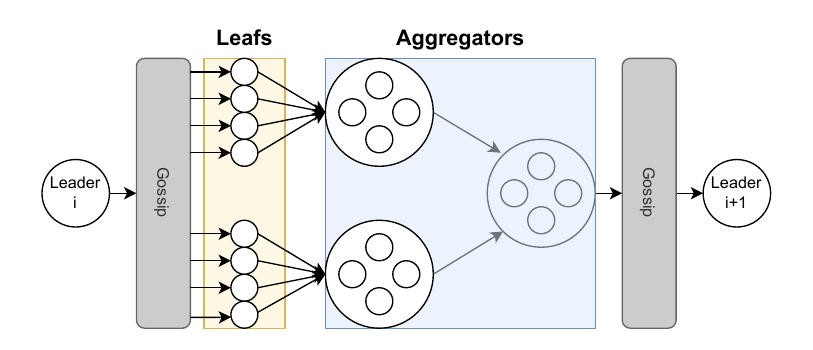}
    \caption{\sys system architecture.}
    \label{fig:architecture}
\end{figure*}

\subsection{Public IPs}
\label{sec:ip}

% The following might also be moved to another section. It doesn't 100% perfectly fit here.

At the time of writing, Ethereum relies on its gossip network to obfuscate the IP addresses of validators, thereby preventing proposers from being censored due to attacks (e.g., DDOS). This is important, as there is an inherent incentive for validators to attack the previous proposer: if the previous proposer fails to disseminate their block within the designated time, the block reward and MEV opportunities are passed to the next proposer. Furthermore, proposers incur penalties when they fail to propose in their designated slot, reducing their stake in the system. As such, a malicious validator can increase their relative stake and reduce the stake of correct validators through successful censorship attacks.

However, as recent research has shown~\cite{heimbach2024deanonymizing}, it is straightforward to deanonymize validators in the existing gossip network, enabling rational and Byzantine validators to execute the aforementioned attacks.
The deanonymization process is facilitated by the regular interaction of validators with the gossip network to broadcast consensus votes (at least once every 6.4 minutes). This allows adversaries to sample gossip messages at different positions in the gossip network to identify proposers.
We therefore propose splitting validators into two distinct processes with distinct IP addresses: a validator process that participates in the aggregation protocol, and a proposer process that builds block proposals.
We discuss in Section~\ref{sec:discussion}, how this can be set up without significantly increasing the hosting costs or complexity for validators.

In order to obfuscate the proposer IP, communication with and from the proposer such as block broadcast and sending the final aggregates to the proposer still has to go through Ethereum's gossip layer.
However, the infrequent interaction with the gossip network makes it significantly harder to identify their IP address, thereby making it also harder to execute proposer censorship attacks.
As a result, communication between the validators can take place through point to point channels, substantially reducing protocol latency.
This allows us to partially bypass the current limitations of the gossip network, where messages can take up to 4 seconds to reach $98\%$ of processes, which made it prohibitive to design more sophisticated aggregation protocols with deeper trees~\cite{ethereum4sresearch}.

\subsection{Architecture}

Figure~\ref{fig:architecture} illustrates the architecture of \sys.
Since communication to and from the proposer must still traverse the gossip layer, there are two phases in \sys that rely on gossip: (i) the initial broadcast of the proposal, and (ii) the transmission of the final aggregates to the next proposer to be included in their block.

Because it can take up to 4 seconds for 98\% of validators to receive a proposal through the gossip layer, given the two gossip phases and the 12 second slot-time in Ethereum, to guarantee that \sys can operate within the existing Ethereum specifications \sys has approximately 4 seconds to verify the previous block, verify and aggregate $N$ signatures and transmit signature aggregates representing all $N$ validators to the next proposer.

There are two main types of validators in the aggregation protocol.
First, \textit{leaf validators} receive a block proposal through the gossip layer, verify the proposal and then send their signature to their respective committee.
Second, each committee consists of $128$ \textit{aggregators} where, analogously to Ethereum, $16$ committee representatives are randomly selected to aggregate the signatures and forward their aggregate to the next layer of aggregators or, through the gossip layer to the next proposer.

Aggregators can be further divided into \textit{leaf aggregators}  that receive individual signatures from up to $m$ (fanout of the tree) leaf validators. And \textit{internal aggregators} that receive $16$ signature aggregates from $\frac{m}{16}$ individual committees. This results in a constant fanout of $m$ at any level of the tree.

%Given at most $f \leq \frac{N-1}{3}$ faulty nodes, the probability that a given committee of 128 nodes has a dishonest $\frac{2}{3}$ majority is $4*10^{-15}$

\subsection{Bootstrapping}

In \sys, at the end of the previous epoch, based on a common random seed (e.g., Ethereum's RANDAO\cite{randao}), instead of dividing all $N$ validators into 32 committees as in Ethereum, we construct 32 distinct trees where we aggregate all $N$ signatures in a different tree in each slot. In each tree, all $N$ validators participate as \textit{leaf validators}, and additionally, a randomly selected subset of validators also acts as \textit{aggregators} within the protocol.

As outlined in the previous section, we assume that all validators know each other's IP addresses. However, as it is impractical to maintain a TCP connection to millions of validators, at the end of the current slot each validator only connects to their respective parent and child nodes of the next slot, while the block is being gossiped to the proposer.
Thus, validators can establish a TCP connection outside of the commit path of consensus, without impacting the system performance. %This also leaves validators sufficient time to configure their firewall rules for the next slot (more details in Section~\ref{sec:discussion}).
%Given that gossip takes up to 4 seconds~\cite{ethereum4sresearch}, and the leader still has to aggregate the signatures, build the new block, and gossip it, validators have sufficient time to set up the new connections for the next round.

\subsection{Protocol}

As previously outlined, there are 4 key roles in \sys: Leaf Validators, Leaf Aggregators, Internal Aggregators, and the Leader/Proposer.
In the following, we describe the algorithms for each of the roles.

\begin{algorithm*}[!t]
\centering
\caption{Leaf Validator}
\label{algo:leaf_validator}
\begin{algorithmic}[1]
\Function{deliver}{$B_i$} \Comment{Deliver Block $B_i$ from the Gossip Layer}
\State{$agg\_pub \gets aggregate(B_i.bit\_vec)$} \Comment{Aggregate up to $N$ public keys}
\If{$verify(B_i.agg\_sig, agg\_pub)$} \Comment{Verify Aggregate Signature}
\State{$send(vote(B_i))$} \Comment{Send vote to leaf aggregator committee}
\EndIf
\EndFunction
\end{algorithmic}
\end{algorithm*}

All $N$ validators in the system are \textit{leaf validators} in each round.
The leaf validator algorithm is described in Algorithm~\ref{algo:leaf_validator}. Leaf validators verify the aggregated signature of the previous round by retrieving the public key bit vector from the block $B_i$, aggregating the respective public keys, and then verifying the signature in the block with the aggregated public key. If the signature is valid, leaf validators sign the hash of this new block $B_i$ and send the resulting signature vote to their respective leaf aggregator committee.

\begin{algorithm*}[!t]
\centering
\caption{Leaf Aggregator}
\label{algo:leaf_aggregator}
\begin{algorithmic}[1]
\State{$agg\_sig \gets \bot$}
\State{$bit\_vec \gets 0$} \Comment{Empty Bitvector}
\Function{deliver}{$vote_i^j$} \Comment{Deliver vote $v_i^j$ for block $B_i$ from leaf $j$}
\If{$verify(vote_i^j)$}
\State{$bit\_vec[j] \gets 1$} \Comment{Note participation of $j$ in bitvector}
\If{$agg\_sig$ is $\bot$}
\State{$agg\_sig \gets vote_i^j.sig$}
\Else
\State{$agg\_sig \gets aggregate(agg\_sig, vote_i^j.sig)$} \Comment{Aggregate Signatures}
\EndIf
\If{$|bit\_vec| = m$}
\State{$send(agg\_sig, bit\_vec)$}
\EndIf
\EndIf
\EndFunction
\Function{$on\_timeout$}{} \Comment{After Timeout $\Delta$}
\State{$send(agg\_sig, bit\_vec)$}
\EndFunction
\end{algorithmic}
\end{algorithm*}

A smaller subset of validators has additional responsibilities, for example, being a \textit{leaf aggregator} as described in Algorithm~\ref{algo:leaf_aggregator}. Leaf aggregators receive votes from the leaf nodes and verify and aggregate them. After receiving all votes or a timeout, they send the aggregated signature and the respective public-key bit vector to the internal aggregators.

\begin{algorithm*}[!t]
\centering
\caption{Internal Aggregator \& Leader/Proposer}
\label{algo:internal_aggregator}
\begin{algorithmic}[1]
\State{$(agg\_pub\_vec,agg\_sig\_vec,bit\_vec\_vec) \gets (\emptyset,\emptyset,\emptyset)$}
\State{$(agg\_sig,bit\_vec) \gets (\bot,\emptyset)$}
\Function{deliver}{$agg\_sig^z, bit\_vec^z$} \Comment{Deliver agg\_sig and agg\_vec for subcommittee $z$}
\State{$agg\_pub^z \gets \bot$}
\For{$bit, index \in bit\_vec^z$}
\If{$bit$} \Comment{If public key is present (bit is set)}
\If{$agg\_pub$ is $\bot$}
\State{$agg\_pub^z \gets pki[z*m+index]$} \Comment{Get public key at offset and index}
\Else
\State{$agg\_pub^z \gets aggregate(agg\_pub^z, pki[z*m+index])$} \Comment{Aggregate public keys}
\EndIf
\EndIf
\EndFor
\If{$verify(agg\_sig^z, agg\_pub^z)$} \Comment{Verify Aggregated Signature}
\State{$bit\_vec\_vec[z] \gets bit\_vec\_vec[z] \cup bit\_vec^z$} \Comment{Add bit vector to set}
\State{$agg\_pub\_vec[z] \gets agg\_pub\_vec[z] \cup agg\_pub^z$} \Comment{Add aggregate public key to set}
\State{$agg\_sig\_vec[z] \gets agg\_sig\_vec[z] \cup agg\_sig^z$} \Comment{Add aggregate signature to set}
\If{$|bit\_vec\_vec[z]| = 16$} \Comment{If we collected all $16$ bitvectors of subcommittee $z$}
\State{$aggregate\_sub\_committee(z)$}
\EndIf
\If{$|bit\_vec| = \frac{m}{16}$} \Comment{Aggregated all child committee aggregates}
\State{$send(agg\_sig, bit\_vec)$}
\EndIf
\EndIf
\EndFunction

\Function{$aggregate\_sub\_committee$}{$z$}
\State{$(largest\_bit\_vec^z, largest\_index) \gets (\bot, \bot)$} \Comment{Track largest}
\For{$bit\_vec, index \in bit\_vec\_vec[z]$} \Comment{Find largest $bit\_vec$}
\If{$largest\_bit\_vec$ is $\bot$}
\State{$largest\_bit\_vec \gets (bit\_vec, index)$}
\ElsIf{$|bit\_vec| > |largest\_bit\_vec|$}
\State{$(largest\_bit\_vec,largest\_index) \gets (bit\_vec, index)$}
\EndIf
\EndFor
\State{$bit\_vec[z] \gets largest\_bit\_vec^z$}
\If{$agg\_sig$ is $\bot$} \Comment{Set or aggregate}
\State{$agg\_sig \gets agg\_sig\_vec[z][largest\_index]$}
\Else
\State{$agg\_sig \gets aggregate(agg\_sig, agg\_sig\_vec[z][largest\_index])$}
\EndIf
\EndFunction

\Function{$on\_timeout$}{} \Comment{After Timeout $\Delta i$(inverted depth)}

\For{$index \in 0..\frac{m}{16}$} \Comment{Iterate over all subcommittee indizes}
\If{$bit\_vec[index] = \bot$}
\State{$aggregate\_sub\_committee(index)$}
\EndIf
\EndFor
\State{$send(agg\_sig, bit\_vec)$}
\EndFunction
\end{algorithmic}
\end{algorithm*}

Both the \textit{Proposer} and \textit{internal aggregator} run the same protocol, with the exception that the proposer does not send the resulting aggregate to a higher-up committee, but instead forwards it internally to include it in its block (Algorithm~\ref{algo:internal_aggregator}).
Internal aggregators receive, on expectation, 16 signature aggregates per child committee $z$ for a total of $\frac{m}{16}$ child committees. They aggregate all public keys for each of the received signature aggregates, verify the signature, and store them temporarily.
On receiving 16 (or after a timeout), the internal aggregator selects the largest aggregate it received from each of the child committees $z$ and then aggregates them into a single aggregate which is then sent to the next internal aggregator, or internally forwarded to block creation.

% Part of bootstrapping is also finding the right fanout and stuff
% Need better title, but this is how we find a good fanout etc.
\smallskip\noindent{\bf Optimal Tree Identification.}
Before validators can set up the connections, they need to know the architecture and their role and position in the tree to identify their parent and child nodes.
As such, given $N$ validators, we want to deterministically identify the fanout $m$ that \emph{minimizes the time} it takes to aggregate $N$ signatures.
Importantly, while trees with smaller fanout $m$ distribute the computational load more evenly, they come at the cost of increased networking latency $\delta^{\text{tree}}$.

As a first step, we want to calculate the time $\delta$ it takes to aggregate $N$ signatures in a generic tree $T'$ considering $N$ validators and a generic fanout $m$. $\delta$ is a combination of the computational load $\delta^{\text{comp}}$ (e.g., verification and aggregation), and the network delay $\delta^{\text{tree}} = \Delta\cdot d$ given the tree depth $d$ and network latency $\Delta$.
To simplify the analysis, we assume that validators verify all messages they receive and aggregators only include the largest aggregate from each child committee in their final aggregate.
Furthermore, we divide our tree into two subtrees. One tree of depth $d=1$ and fanout $m$ and one tree of fanout $m'=\frac{m}{16}$(given 16 committee representatives) of depth $d=\ceil{\log m'(\frac{N}{16})}$. Resulting in a total depth of $d=\ceil{\log m'(\frac{N}{16})}+1$.

Given a perfectly balanced tree, to calculate $\delta^{\text{comp}}$, we consider the public key aggregation cost $\mathsf{pka}$, the signature aggregation cost $\mathsf{sga}$, and the signature verification cost $\mathsf{sgv}$.
Considering the four validator roles in the system, we need to calculate the cost of the individual types at each layer of the tree.

First, leaf validators aggregate $N$ public keys to verify the signature included by the proposer and execute the block to verify validity which takes $\delta^{\text{execute}}$, resulting in $N \cdot \mathsf{pka}+\mathsf{sgv}+\delta^{\text{execute}}$. Next, leaf aggregators, have to verify and aggregate $m$ signatures, resulting in $m \cdot \mathsf{sgv}+m \cdot \mathsf{sga}$.

After this, internal aggregators receive $m$ messages for which they have to aggregate the respective public keys to verify the signature aggregates, and select a subset of size $m'$ to aggregate and forward.
The number of public keys that have to be aggregated depends on the depth of the tree. Given the inverted depth $i$ (i.e., $i=d$ at the proposer and $i=0$ at the leaves), $m\cdot m'^i \cdot 16$ public keys are aggregated per child committee in a perfectly balanced tree.
This results in the total cost of $\sum_{i=1}^{d+1}{m \cdot \mathsf{sgv}+m' \cdot \mathsf{sga}+\min(N, m \cdot m'^i) \cdot 16 \cdot \mathsf{pka}}$.

Finally, for any perfectly balanced tree $T'$ with fanout $m$ and $N$ validators, this results in a total of: $\delta^{\text{agg}} = \sum_{i=1}^{d+1}{m \cdot \mathsf{sgv}+m' \cdot \mathsf{sga}+\min(N, m \cdot m'^i) \cdot 16 \cdot \mathsf{pka}} + m \cdot \mathsf{sgv}+m \cdot \mathsf{sga} + N \cdot \mathsf{pka}+\mathsf{sgv}$. Given $C$ cores, most of the aggregation and verification load can be parallelized, resulting in $\delta = \frac{\delta^{\text{agg}}}{C} +\delta^{\text{execute}} + \Delta d$.
Based on this, given $N$, we can find the best tree $T'$ by calculating $\delta$ for all possible $m$. We discuss efficient approaches to identify the best trees in Appendix~\ref{sec:discussion}.

\smallskip\noindent{\bf Reward Allocation.}\label{subsec:rewards}
We allocate attestation rewards over length-$k$ windows (fixed-disjoint blocks of $k$ slots, or sliding). A validator earns rewards only if it is included in at least one of the $k$ slots within a window, otherwise, it receives no reward for that window. As \sys aggregates all $N$ signatures every slot, validators have 32 opportunities per epoch, compared to just one in Ethereum.
Given this rule, the probability that an honest validator loses a window's reward is $(1-p)^k$, where $p$ is the per-slot inclusion probability. We choose $k$ to keep this loss negligible, while keeping $k$ small enough to discourage free-riding (see~\Cref{sec:discussion} for more details and a discussion on free-riding).

\subsection{Optimizing Public Key Aggregation}
\label{sec:optimizations}

One of the main cost factors of the protocol is the aggregation of all $N$ public keys.  
However, given that Ethereum and other proof-of-stake chains maintain an average participation rate above 99\%~\cite{cobra}, we can significantly optimize this.

We describe a straightforward optimization that leverages a unique property of BLS signatures: the ability to subtract signatures and public keys from an aggregate~\cite{ethbls}.  
Given a 99\% participation rate, instead of aggregating 99\% of public keys, we can cache the complete aggregate and subtract the 1\% missing public keys.  
Because subtraction corresponds to adding the inverse of the public key, subtracting a public key from an aggregate has the same computational cost as addition. Thus, this optimization achieves a $99\%$ reduction in the cost of public key aggregation.

Even in the worst case, with up to $f \approx \frac{1}{3}$ faulty validators, we only need to subtract the $\frac{1}{3}$ missing public keys instead of aggregating the remaining $\frac{2}{3}$, effectively halving the computational cost.  
In general, for a given participation rate $r$, this optimization reduces the aggregation cost by a factor of $\frac{1 - r}{r}$. 
We show in~\Cref{sec:varying} that our optimization significantly reduces system load even in the case of 2.5 million nodes in the worst case setting of $\frac{2}{3}$ participation rate.

However, this subtraction-based approach is more difficult to parallelize.  
To address this, we can divide the cached public key aggregate into $C$ chunks, allowing $C$ parallel tasks to independently remove missing public keys from each chunk.  
This requires performing $C$ additional aggregations, which is negligible when $N$ is large, as we expect $C$ to remain small (e.g. $C \leq 16$).

\section{Censorship-Resilience in \sys }\label{sec:analysis}

We analyze \sys in the Byzantine model with a slowly adaptive adversary (Section~\ref{sec:systemmodel}).
Under our reward allocation policy (\Cref{subsec:rewards}), it suffices to bound the probability that an honest validator's vote cannot be censored for $k$ consecutive slots.
Guaranteeing that this probability is small ensures honest participation is rewarded and stake does not drift toward the adversary.
To this end, we first lower-bound the per-slot inclusion probability $p=p(d,m)$ as a function of tree depth $d$ and fanout $m$, and then lift this bound to $k$ consecutive slots.

\smallskip\noindent{\bf Minority correct committee leadership.}
We consider vote censorship attack opportunities involving the internal aggregators of the tree. We first observe that deeper aggregation trees have more vote censorship opportunities involving the internal aggregators.
As internal aggregators only forward the largest aggregate they receive, malicious nodes can collude and guarantee that the largest aggregate always originates from a malicious aggregator.

In more detail, malicious leaf nodes send their votes exclusively to malicious aggregators.
As long as there is at least one malicious aggregator and $x\geq 2$ malicious leaf nodes, the malicious aggregator can construct a larger aggregate than any correct aggregator by always including votes from the $x$ malicious leaves that no correct aggregator received.  
This strategy allows the malicious aggregator to produce a larger aggregate than any honest aggregator, while censoring up to $x-1$ correct nodes. As such, this allows malicious aggregators to censor up to $\frac{1}{3}$ of the committee members.
%$\approx \frac{1}{3}$ of the committee members.

Suppose we have $f$ total adversarial nodes and $L$ leaf groups. 
We now compute the probability that there is one leaf group with $\geq 3$ adversarial nodes, with one of the adversarial nodes becoming one of the 16 aggregators.

\begin{lemma}\label{lem:successprob}
    The probability of at least one leaf group with $\geq 3$ adversarial nodes and $\geq 1$ adversarial aggregator is at least $ \sum_{s=3}^{128-16} \left( \frac{\binom{f}{s}\binom{N-f}{128-16-s}}{\binom{N}{128-16}} \right) \cdot \sum_{s=1}^{16}\left(\frac{\binom{f}{s}\binom{N-f}{16-s}}{\binom{N}{16}} \right) $
\end{lemma}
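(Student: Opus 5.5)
The probability of the union over all leaf groups is at least the probability for one fixed group. It therefore suffices to show that a single fixed leaf group $G$ satisfies the event with at least the stated probability; the union bound then runs in the trivially favorable direction, since the event ``some group is bad'' contains the event ``$G$ is bad''.

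Fix $G$, a committee of $128$ validators whose $16$ representatives (aggregators) and $128-16=112$ non-aggregating members we treat as two disjoint blocks. The analysis first conditions on the composition of the aggregator block and then handles the non-aggregator block. Model the sampling as in \Cref{eq:hyper}: the $16$ aggregators form a uniformly random $16$-subset of the $N$ validators, of whom $f$ are faulty. The number $Y$ of faulty aggregators is then hypergeometric, and
\[
\Pr[Y\ge 1]=\sum_{s=1}^{16}\frac{\binom{f}{s}\binom{N-f}{16-s}}{\binom{N}{16}}.
\]
Likewise, let $X$ be the number of faulty nodes among the $112$ non-aggregating members, treated as a uniformly random $112$-subset of the population. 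Then
\[
\Pr[X\ge 3]=\sum_{s=3}^{112}\frac{\binom{f}{s}\binom{N-f}{112-s}}{\binom{N}{112}}.
\]
On the event $\{X\ge 3\}\cap\{Y\ge 1\}$, $G$ contains at least three adversarial leaf nodes and at least one adversarial aggregator, which is exactly the required event. So $\Pr[\text{event for }G]\ge \Pr[X\ge 3,\,Y\ge 1]$.

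The main obstacle is replacing the joint probability by the product $\Pr[X\ge3]\Pr[Y\ge1]$. Under sampling without replacement the two blocks are drawn from one shared pool, so $X$ and $Y$ are negatively associated. I plan two ways to handle this.

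\emph{First route: literal modelling.} Take the two blocks as independent uniform draws from the full population, which matches how the paper applies \Cref{eq:hyper} separately to committee membership and to representative selection. Under that model the events are independent, and the product holds with equality.

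\emph{Second route: the genuinely dependent model.} Condition on $Y=s\le 16$. The remaining pool then has $f-s$ faulty nodes out of $N-16$. One would have to show that $\Pr[X\ge 3\mid Y=s]$ is at least the unconditional value up to negligible error, because removing at most $16$ nodes from a pool of size $N\approx 10^6$ changes the faulty fraction by $O(16/N)$.

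If the second route cannot be made to give a clean inequality, I will state the independence modelling explicitly as the assumption under which the lemma is meant, consistent with the paper's earlier treatment of representatives. The final step is to multiply the two sums to obtain the bound.
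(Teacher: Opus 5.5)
Your proposal is correct and is essentially the paper's argument: the paper's proof is only the remark that the bound ``follows from the hypergeometric distribution'', i.e.\ it takes the single-group product of the two tails for the $112$ non-representatives and the $16$ representatives, which tacitly uses the independent-draws modelling of your first route. Do state that modelling explicitly as you plan, because your second route cannot close for a single fixed group: the negative association you cite gives $\Pr[X\ge 3,\,Y\ge 1]\le\Pr[X\ge 3]\Pr[Y\ge 1]$ under exact sampling without replacement, so conditioning only yields the product minus an $O(1/N)$ term, and an assumption-free proof would need to exploit the slack from having many leaf groups instead.
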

\begin{proof}
    Follows from the Hypergeometric distribution. 
\end{proof}

With $N = 1000000$, $f = \frac{N}{3}$, the above probability is $0.998$ for a single round and with $f=0.05N$, the above probability still remains fairly large at $0.548$.

Although we could avoid censorship in this situation by including all aggregates in the final aggregate, this would significantly increase the computation, storage, and networking costs.
However, we can significantly reduce the probability of success of this attack by additionally choosing a random aggregate alongside the largest.
For a tree of depth $d$, the following lemma computes the probability of a single leaf node being successfully included in a single slot with $f \leq \frac{1}{3}$ faulty nodes. The proof of~\Cref{lem:internal} is in~\Cref{app:proof_internal}

\begin{restatable}[]{lemma}{internal}\label{lem:internal}
    For a tree of depth $d$, the probability of a single leaf node being successfully included in a single slot is at least $\frac{2}{3} \cdot \frac{2}{3}^{d-2}\cdot \frac{16-\frac{16}{3}}{15}$
\end{restatable}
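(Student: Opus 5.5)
We track a fixed honest leaf validator $v$ along its root-to-leaf path and lower-bound the probability that its vote survives each hop. Survival is treated as a product of per-level events, with independence across levels justified by the fresh random sampling of each committee's representatives. We use the modified forwarding rule: each internal aggregator forwards the largest aggregate from each child committee \emph{together with} one uniformly random aggregate among those it received from that child committee.

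\textbf{Leaf level.} $v$ sends its vote to all $16$ representatives of its leaf aggregator committee. Each representative is drawn from a population with at most $\frac{1}{3}$ faulty nodes, so in expectation at least $16-\frac{16}{3}$ of the $16$ representatives are honest, and every honest one includes $v$'s vote in its aggregate. At the next internal aggregator, at most one of the $16$ aggregates from this committee is already taken as the largest, and it may be adversarial. The additional random choice is then made among the remaining $15$, of which in expectation at least $16-\frac{16}{3}$ are honest and contain $v$. Hence, conditional on the internal aggregator being honest, the probability that some forwarded aggregate of this committee contains $v$ is at least $\frac{16-\frac{16}{3}}{15}$. To turn this expectation-based count into a probability, we average over the hypergeometric count of honest representatives, using linearity of expectation. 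The factor $\frac{1}{15}$ comes from the uniform choice over the non-largest aggregates.

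\textbf{Internal levels and the root.} At each of the remaining internal hops, the aggregate containing $v$ must be handled by an honest aggregator that actually performs the random-inclusion rule. An honest aggregator forwards both selected aggregates, so $v$ survives. A dishonest one may drop it. Since each aggregator (representative) is honest with probability at least $\frac{2}{3}$, each such hop contributes a factor $\frac{2}{3}$. There are $d-2$ such intermediate levels (excluding the leaf-aggregator level and the proposer level), giving the factor $\left(\frac{2}{3}\right)^{d-2}$. The proposer level contributes one more factor $\frac{2}{3}$, since it is an honest proposer with probability at least $\frac{2}{3}$. Multiplying the per-level bounds yields the claimed
\[
\frac{2}{3}\cdot\left(\frac{2}{3}\right)^{d-2}\cdot\frac{16-\frac{16}{3}}{15}.
\]

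\textbf{Main obstacle.} The delicate step is the bookkeeping of levels, together with justifying the product. The indexing has to be fixed so that the leaf-representative stage, the $d-2$ intermediate hops and the proposer stage account for exactly the stated exponents. The product also relies on independence between levels, which requires that the slowly adaptive adversary cannot correlate corruption with tree positions. This follows because the trees are sampled from RANDAO after the corruption set for the epoch is fixed. I would state this as the conditioning assumption, and then argue that at each honest hop the random pick is independent of adversarial choices, so that the per-level bounds multiply. I would also note that the honest-representative count enters through its expectation. A fully rigorous version would replace it by a hypergeometric tail bound, but the stated lemma is the expectation form.
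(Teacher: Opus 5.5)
Your decomposition matches the paper's: the factor $\frac{16-\frac{16}{3}}{15}$ for the uniform pick among the $15$ non-largest aggregates at the first internal level (depth $d-1$), then $d-2$ factors of $\frac{2}{3}$ for depths $d-2,\dots,1$, then a final $\frac{2}{3}$ for an honest proposer. Your justification of the intermediate factors, however, is wrong. You claim that an honest aggregator at an intermediate hop guarantees that $v$ survives because it ``forwards both selected aggregates'', so that the only loss per hop is the chance that this aggregator is faulty. Your own first-level analysis shows why this fails. There the receiving aggregator is honest, yet $v$ is kept only if its uniform pick lands on one of the honest aggregates. Nothing changes higher up. An honest representative at depth $\ell$ receives the aggregates of the child committee's $16$ representatives and forwards the largest one, which in the worst case is adversarial and omits $v$. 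It also forwards one uniformly random aggregate, and that one carries $v$ only if it came from a correct child representative whose own selection carried $v$. So the event to bound at each of depths $d-2,\dots,1$ is that the random pick lands on such an aggregate. That is what the paper's factor $\frac{2}{3}$ is attached to: assuming the largest aggregate already came from a faulty aggregator, the uniform pick hits an aggregate from a correct node with probability $\frac{2}{3}$. Your argument never bounds this event above depth $d-1$. Your final expression agrees with the lemma only because the fraction of honest receivers and the fraction of honest senders happen to be the same number $\frac{2}{3}$.

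The honesty factor also has no clear referent. At a given depth there is no single aggregator that ``handles'' $v$'s aggregate, since each of the $16$ representatives of the parent committee receives all $16$ child aggregates. The chance that at least one of them is honest is far larger than $\frac{2}{3}$. The paper in effect follows a correct representative at each depth, charges only for its random pick, and pays for honesty only at the single proposer. That also settles the bookkeeping you left open. Your depth-$(d-1)$ bound is conditioned on the receiving aggregator being honest, which under your per-hop accounting would cost an extra factor you never pay. To repair the proof, drop the per-hop honesty factors and charge each of depths $d-2,\dots,1$ for the random pick, exactly as you already did at depth $d-1$.
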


Let us denote the probability as specified by~\Cref{lem:internal} as $p$.
We recall that there is only a penalty if a node consistently fails to participate for $k$ consecutive slots, and this probability is $(1-p)^k$ for any single node.
Considering there are many concurrent committees that the malicious validators can attempt to censor in parallel, given $L$ committees on the lowest level of the tree, the probability for no committee to be censored is $(1-(1-p)^k)^L$ over $k$ rounds.

Concretely, for $k=64$ (2 epochs), $m=256$ and $L=4096$ with a depth of $4$, this results in a $0.998$ probability that no node got censored consistently over two epochs.
We stress that this is a much higher guarantee compared to the status quo on Ethereum, where, even if it followed the 2 epoch rule it would only ensure a $88.9\%$ guarantee for each node. 
To achieve comparable vote censorship resilience guarantees, in the current system, Ethereum would require a 6 epoch timeout before a node can lose out on rewards due to the lack of participation.

\begin{figure}[ht]
    \centering
    \includegraphics[width=\linewidth]{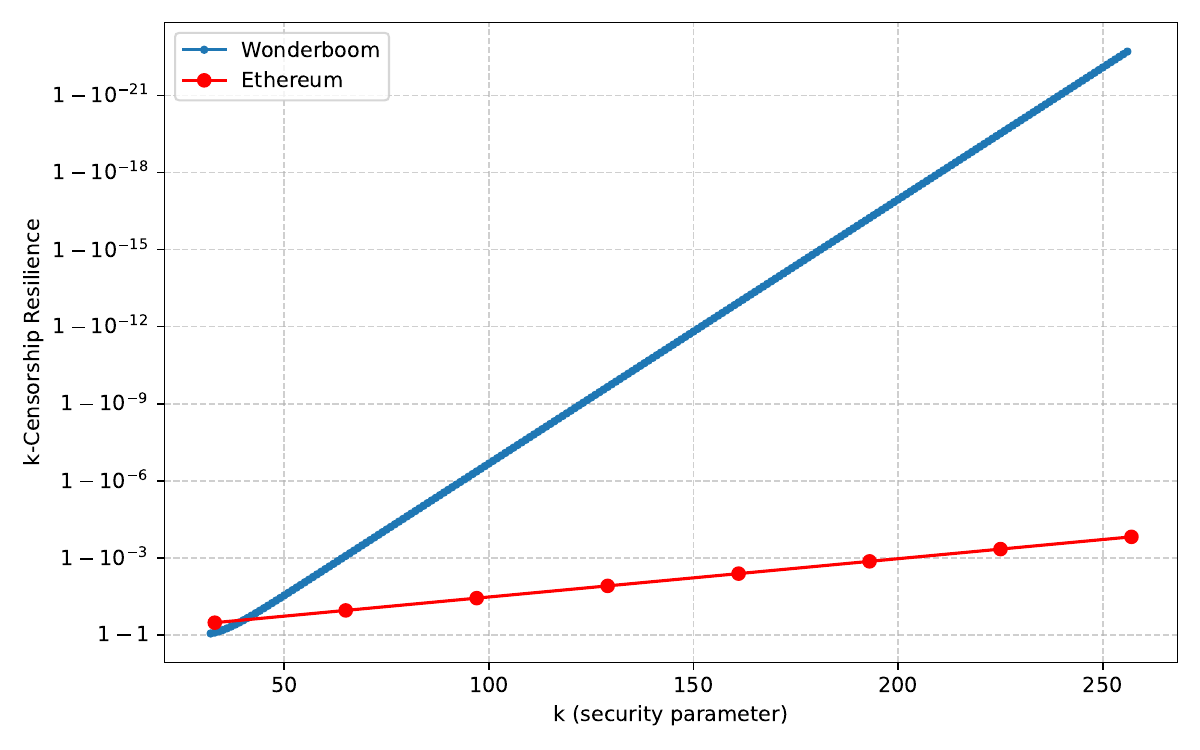}
    \caption{Comparison of censorship resilience of Wonderboom and Ethereum for different values of $k$.}
    \label{fig:prob}
\end{figure}

Figure~\ref{fig:prob} illustrates the difference in vote-censorship resilience between the two approaches in the presence of $1/3$ faulty validators. The x-axis represents the security parameter $k$, while the y-axis shows censorship resilience (the inverted probability of a successful censorship attack). \ray{In Ethereum, this probability is dominated by two factors: (i) $1/3$ probability to be in a slot with a faulty proposer, and (ii) one opportunity to vote every 32 slots. Thus, for $k$ consecutive epochs this results in $(1/3)^k$.}
Compared to Ethereum, \sys not only achieves significantly higher censorship resilience, but this resilience also grows much more rapidly as $k$ increases.

%\smallskip{\em Faulty Block Proposer.}
% They can just censor any subtree. Easiest attack target. But we don't know their IP!
%The validator only has to corrupt 1 out of the aggregator pools on the way to the leader. So they have multiple opportunities. So it is $(1-0.687)^{d-2}*\frac{1}{3}$ given there is a $\frac{1}{3}$ probability that the leader is faulty and excludes a given sub committee aggregate.
%This results in $sth$, which for several rounds, is already negligible.

\smallskip\noindent{\bf Impossibility against fully adaptive adversaries}
To motive our main analysis and result of the vote-censorship resilience of \sys under slowly adaptive adversaries, we first observe that it is impossible to achieve vote censorship resilience against a fully adaptive adversary, i.e., an adversary that can corrupt validators \emph{after} observing the validator placements in the tree structure at the beginning of each slot. 
We stress that the result in~\Cref{thm:strongadaptive} is not a weakness of \sys but a general weakness of all single proposer based protocols against fully adaptive adversaries, whereby the adversary can simply corrupt the proposer if it knows the identity of the proposer and is allowed to corrupt it at the start of each round.

\begin{theorem}\label{thm:strongadaptive}
\sys is insecure under fully adaptive adversaries.
\end{theorem}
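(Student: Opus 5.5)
We prove the theorem by exhibiting one concrete fully adaptive adversary. For some fixed honest validator $v$ and every window of $k$ consecutive slots, it causes $v$'s vote to be excluded in all $k$ slots with probability $1$. Then the per-slot inclusion probability of $v$ is $p=0$, so $(1-p)^k=1$ for every $k$. By Definition~\ref{def:er}, \sys is then $\alpha$-vote-censorship-resilient only for $\alpha=0$. This contradicts the goal of an inclusion failure probability that declines exponentially in $k$, and it means honest $v$ loses every window's reward under the allocation rule of \Cref{subsec:rewards}.

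The adversary works as follows. At the start of each slot the tree is fixed and publicly computable from the common seed (RANDAO) used in the bootstrapping phase. In particular, the proposer, which is the root of the tree, is known, and so is the aggregation path from $v$ to the root. A fully adaptive adversary may corrupt validators after seeing these placements. In each slot it corrupts only the slot's proposer. This uses one corruption per slot, well within the budget $f$, and it may release earlier corruptions to stay within budget. The corrupted proposer runs Algorithm~\ref{algo:internal_aggregator} honestly except for one change: from the child-committee aggregates it receives, it drops every aggregate whose bit vector contains $v$. If that would lose too much weight, it instead uses BLS subtraction to remove only $v$'s contribution. It keeps every other aggregate, so the block still carries a valid aggregate signature over at least $2N/3$ honest-plus-Byzantine votes. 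Safety and liveness of $\Omega$ are therefore untouched, and no honest validator can tell this block apart from one where $v$ simply timed out. A simpler variant corrupts the entire $16$-member representative set of $v$'s leaf committee, which contradicts the ``at least one honest representative'' event that \Cref{lem:internal} relies on.

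The step that needs care, and the main obstacle, is showing that exclusion happens in every slot, with probability exactly $1$. The randomness of \sys (random representatives, the extra uniformly random aggregate) does not help, because it is revealed before the adversary corrupts. Rerunning the argument behind \Cref{lem:internal} with the root corrupted sets the root-level inclusion factor to $0$, so the whole product bound collapses to $0$. We must also make sure the attack is not detectable or slashable. For this we argue that a proposer may legitimately omit aggregates, which is exactly the proposer censorship described in \Cref{subsubsec:root-censorship}. Finally, we note that the argument only used the fact that a single, predictable proposer decides inclusion, so the same attack applies to any single-proposer protocol. This matches the remark preceding the theorem.
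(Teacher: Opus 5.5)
Your proposal is correct and takes the same approach as the paper: the fully adaptive adversary corrupts the publicly known root/proposer in every slot, which costs $32$ corruptions per epoch (hence the paper's condition $f\ge 32$; corruptions need not be released). Targeting a fixed $v$ to get $p=0$ just makes the consequence explicit; your BLS-subtraction fallback is not justified, since the proposer holds only aggregates and not $v$'s individual signature, so dropping $v$'s whole top-level subtree need not preserve a $2N/3$ quorum, but this does not affect the insecurity claim.
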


\begin{proof}
    We construct a strategy that enables the adversary to conduct an almost surely successful vote-censorship attack.
    Since the adversary can corrupt up to $f$ validators for the whole epoch, with the choice to adaptively select the validators at the start of each slot after seeing the validator placements, the adversary just needs to corrupt the root of \sys. 
    This means that as long as $f\geq32$, the adversary can almost surely succeed in censoring up to $\frac{1}{3}$ every slot. 
    Under this strategy, \sys is clearly insecure. 
\end{proof}

\smallskip\noindent{\bf Towards general vote-censorship resilience.} 
Although~\Cref{thm:strongadaptive} states an impossibility result against fully adaptive adversaries, we can achieve vote-censorship resilience if we consider a weaker, slowly adaptive adversary. 
We first show in~\Cref{thm:adaptive} in~\Cref{app:proof_adaptive} that the optimal corruption strategy for a slowly adaptive adversary that can corrupt up to $f$ validators per epoch is to sample $f$ validators uniformly at random to corrupt.
From~\Cref{thm:adaptive}, we show that under per-slot re-randomization of committee placements and epoch-boundary corruption, the worst-case strategy is to corrupt $f$ validators uniformly at random at the start of the epoch. Combining this with~\Cref{lem:internal} yields the main result.
% Our main theorem,~\Cref{thm:main} states that \sys is vote-censorship resilient against slowly adaptive adversaries, i.e., adversaries that can make decisions on who to corrupt at the beginning of the epoch \emph{before} observing the validator placements in the tree structure for each slot in the epoch.
% The proof of~\Cref{thm:main} is in~\Cref{app:proofmain}, and the proof hinges on an intermediate result that states that a slowly adaptive adversary has no better corruption strategy than to randomly corrupt up to $f$ validators at the beginning of the epoch (see~\Cref{thm:adaptive} in~\Cref{app:adaptive}). We expound on this in more detail in~\Cref{app:adaptive}. 
% The intermediate result of~\Cref{thm:adaptive} together with the inclusion probability as computed in~\Cref{lem:internal} consequently gives us the proof of our main theorem. 

\begin{restatable}[\sys vote-censorship resilience]{theorem}{main}\label{thm:main}
    \sys is vote-censorship resilient against slowly adaptive adversaries.
\end{restatable}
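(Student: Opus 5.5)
The plan is to combine \Cref{thm:adaptive} with \Cref{lem:internal} and then lift the per-slot bound to a $k$-slot window, exactly as described in the paragraph preceding the statement. The target is an explicit $\alpha = 1-(1-p)^k$ in the sense of Definition~\ref{def:er}, with $p$ the inclusion probability of \Cref{lem:internal}.

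\textbf{Step 1: reduce to random corruption.} A slowly adaptive adversary must fix its corruption set at the start of an epoch. This happens before the per-slot tree placements, which are derived from the common random seed and re-randomized in every slot, become relevant to its decision. I invoke \Cref{thm:adaptive} to replace an arbitrary adversary by one that corrupts $f$ validators uniformly at random. The point to make explicit is that, conditioned on the corruption set, the placement of the validators in each slot's tree is a fresh uniform permutation. Hence, from the perspective of any fixed honest validator $v$, its ancestor committees and their $16$ representatives in slot $j$ form a uniformly random sample. This holds regardless of which set was corrupted, so the random-corruption case is without loss of generality.

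\textbf{Step 2: per-slot inclusion.} Under this reduction, \Cref{lem:internal} gives, in every slot $j$ of the epoch, $\Pr[v \text{ included in slot } j] \ge p := \tfrac{2}{3}\cdot(\tfrac{2}{3})^{d-2}\cdot\tfrac{16-16/3}{15}$. This bound rests on the forwarding rule that combines the largest aggregate with one uniformly random aggregate. The random choice is made by an honest aggregator and is independent of what the adversary sends, so the adversary's strategy inside a slot (for example, the construction preceding \Cref{lem:successprob}) cannot lower it.

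\textbf{Step 3: lifting to $k$ slots.} This is the step I expect to be the main obstacle. The events ``$v$ excluded in slot $j$'' for $j=1,\dots,k$ are not literally independent, because they share the same corruption set within an epoch. I would handle this by conditioning. Fix the corruption set $F$ of each epoch touched by the window; there are at most $\lceil k/32\rceil+1$ of them. Given $F$, the tree placements of different slots are independent uniform permutations, and the random-aggregate choices of honest aggregators are independent fresh coins. The bound of \Cref{lem:internal} must therefore be shown to hold conditionally on any $F$ with $|F|\le f$, not merely on average over $F$. To get this I would re-check its proof: it only uses that each relevant position in the tree is honest with probability at least $2/3$ under a uniform placement, which holds for every fixed $F$ with $|F|\le N/3$. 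Once the conditional bound is in place, the exclusion probabilities multiply to give $\Pr[v \text{ excluded in all } k \text{ slots}\mid F]\le (1-p)^k$, and averaging over $F$ preserves the bound.

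\textbf{Step 4: conclusion.} Hence every honest validator is included at least once per window with probability at least $\alpha=1-(1-p)^k$. Since $p>0$ is a constant depending only on $d$, this tends to $1$ exponentially in $k$, which is $\alpha$-vote-censorship resilience per Definition~\ref{def:er}. I would also add the union bound over the $L$ leaf committees, $(1-(1-p)^k)^L \ge 1-L(1-p)^k$, to recover the system-wide statement used in the numerical example.
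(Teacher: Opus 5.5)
Your proposal is correct and follows the paper's own argument: invoke \Cref{thm:adaptive} to reduce to uniformly random corruption, apply \Cref{lem:internal} per slot, then lift to $(1-p)^k$ over the window and to the $L$ leaf groups. Your Step~3 conditioning and your union bound $1-L(1-p)^k$ (in place of the paper's implicit independence across leaf groups in $(1-(1-p)^k)^L$) only add rigor the paper omits; phrase the cross-epoch part as sequential conditioning on the history, since a later epoch's corruption set may depend on earlier placements, and note that a bound valid for every fixed corruption set independent of the placements already makes \Cref{thm:adaptive} superfluous.
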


\begin{proof}
    From~\Cref{thm:adaptive}, the optimal strategy for adaptive adversaries is to randomly corrupt a set of $f$ validators at the start of the epoch, before knowing their placements in the trees in each slot as defined by \sys.
    From~\Cref{lem:internal}, we get that with $L$ leaf groups, the probability that no leaf group is censored over $k$ consecutive rounds is $(1-(1-p)^k)^L$. 
    %Since $L$ is fixed, choose $k'=e^L$ and thus $(1-(1-p)^k)^L \rightarrow 1$ for $k > k'$. 
    Thus, \sys satisfies vote-censorship resilience as defined in~\Cref{def:er}. 
\end{proof}

\section{Implementation \& Evaluation}
\label{sec:evaluation}

We implemented a simulation tool for \sys to realistically evaluate its worst-case performance on the scale of millions of nodes. The simulator is written in Rust and implements the essential functionality of \sys, including communication, signature verification, aggregation, and protocol logic\footnote{Available at: https://anonymous.4open.science/r/Wonderboom-main and
https://anonymous.4open.science/r/Wonderboom-ethereum/README.md}.

To evaluate \sys at scale, the tool executes the actual protocol logic for one node per role in sequence while simulating the responses of all other nodes in the system. This approach allows modeling networks with millions of participants on a single machine. We use rayon thread pools to control parallelism and the widely adopted \textit{blst}~\cite{blst} library for BLS signature support. However, we implemented signature subtraction since it was not natively supported by \textit{blst} at the time of writing.

In our evaluation, to simplify the worst-case analysis, we assume that validators must verify all fanout $m$ messages before they can filter out invalid votes, and that all $N$ votes are transmitted even though only a $\tfrac{2}{3}$ fraction is valid. This gives us an upper bound on the worst-case performance.

We ran the experiments on an AMD Ryzen 9 6900HX system with 32 GB of RAM. Throughout all experiments, memory usage remained below 8 GB.
\ray{To emulate network latency, we used the Linux \textit{netem} tool~\cite{netem} with a fixed latency of $100\mathrm{ms}$ between participants (i.e. 200ms RTT).}

The evaluation has three primary goals. First, we aim to determine the configurations under which \sys can aggregate all $N$ signatures within a single slot (4 seconds) even in the worst case. Second, we investigate how relaxing the worst case, e.g., by assuming higher participation rates, affects the practical load on nodes. Third, we want to show how the performance of \sys compares to the state of the art algorithm on Ethereum if all the signatures would be aggregated in a single slot.

    \begin{figure}[t]
        \centering
        \includegraphics[width=\linewidth]{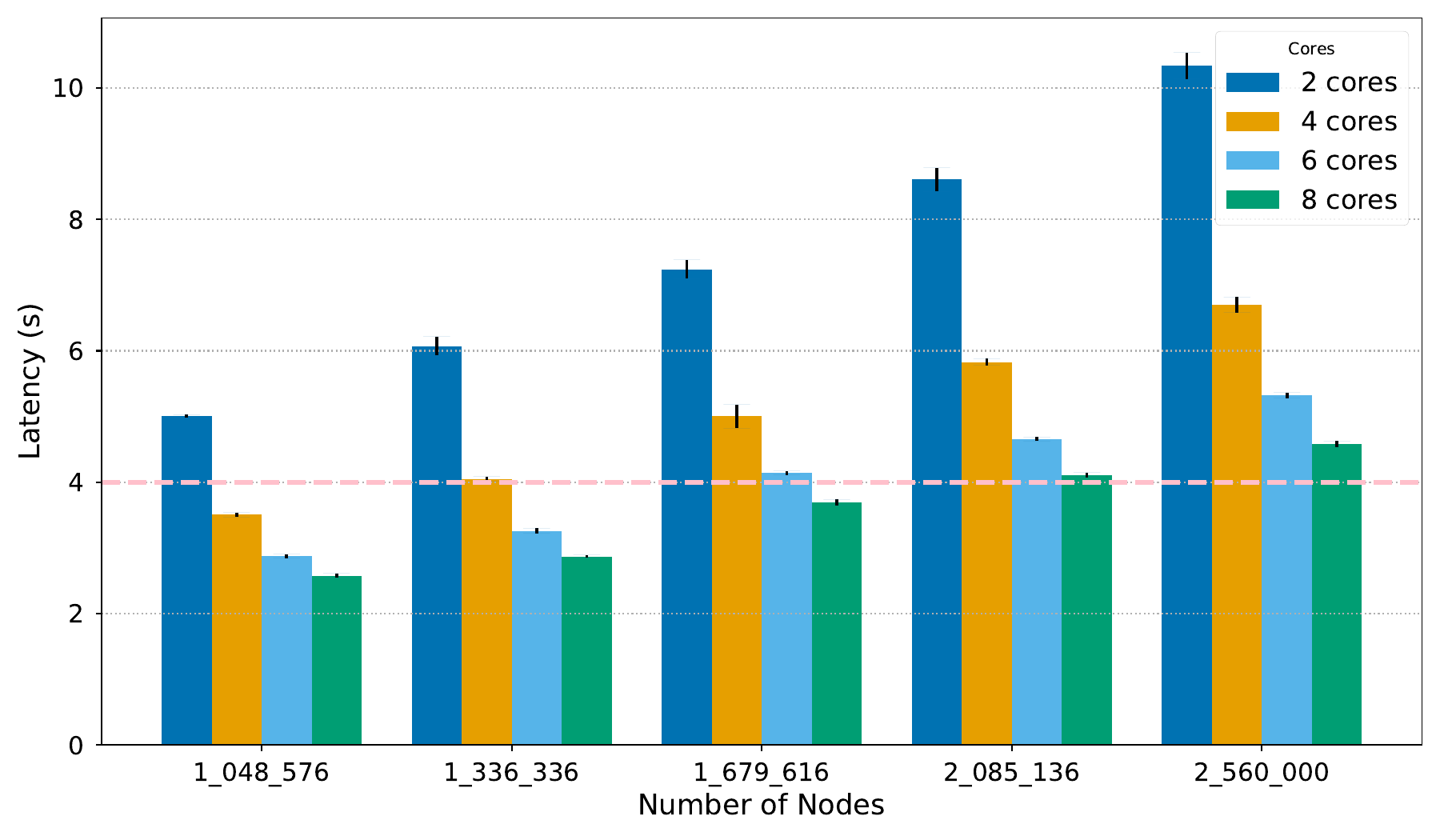}
        \caption{Worst-Case Configurations}
        \label{fig:architecture_nodes}
    \end{figure}%
    \begin{figure}[t]
        \centering
        \includegraphics[width=\linewidth]{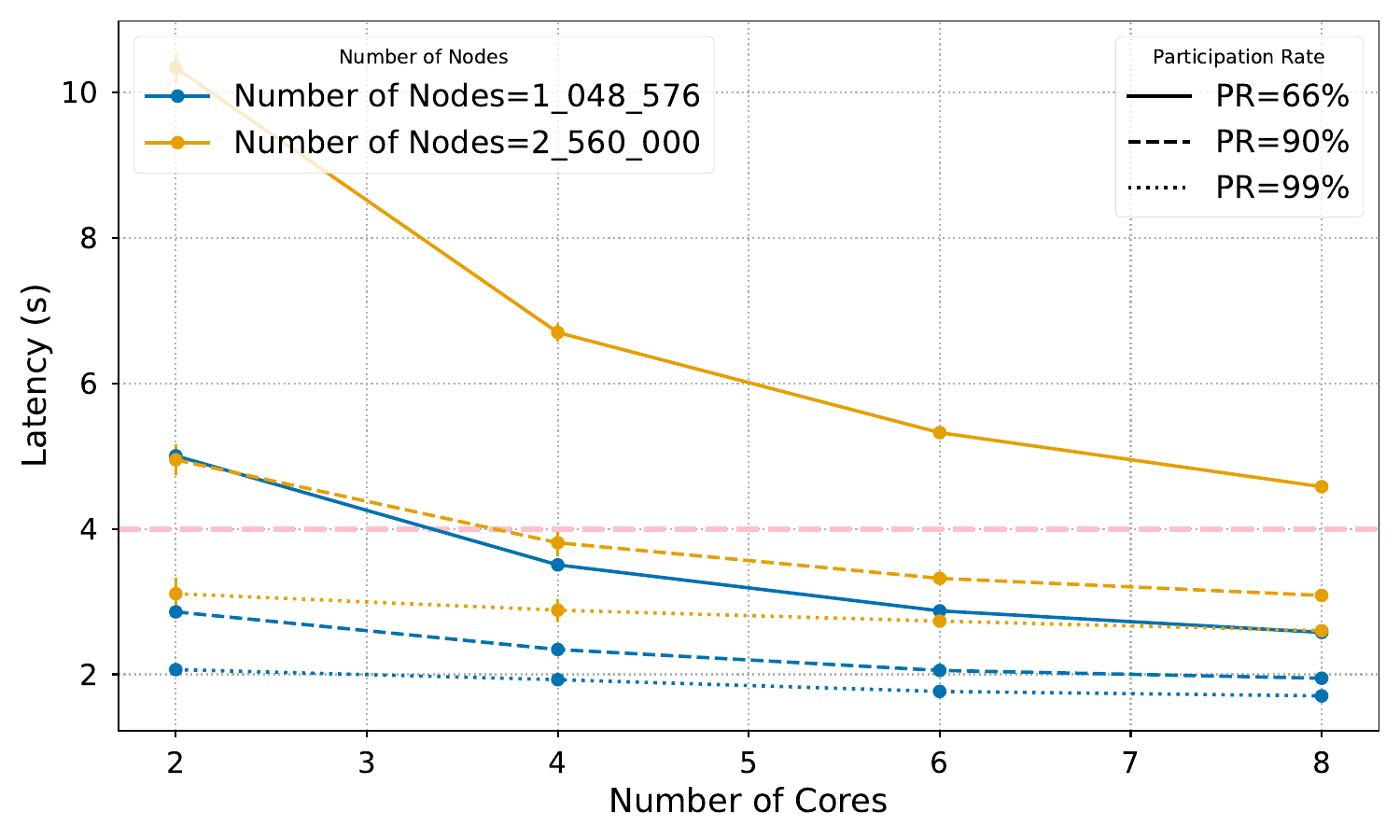}
        \caption{Runtime of \sys under varying participation rates.}
        \label{fig:architecture_cores}
    \end{figure}
    \begin{figure}[t]
        \centering
        \includegraphics[width=\linewidth]{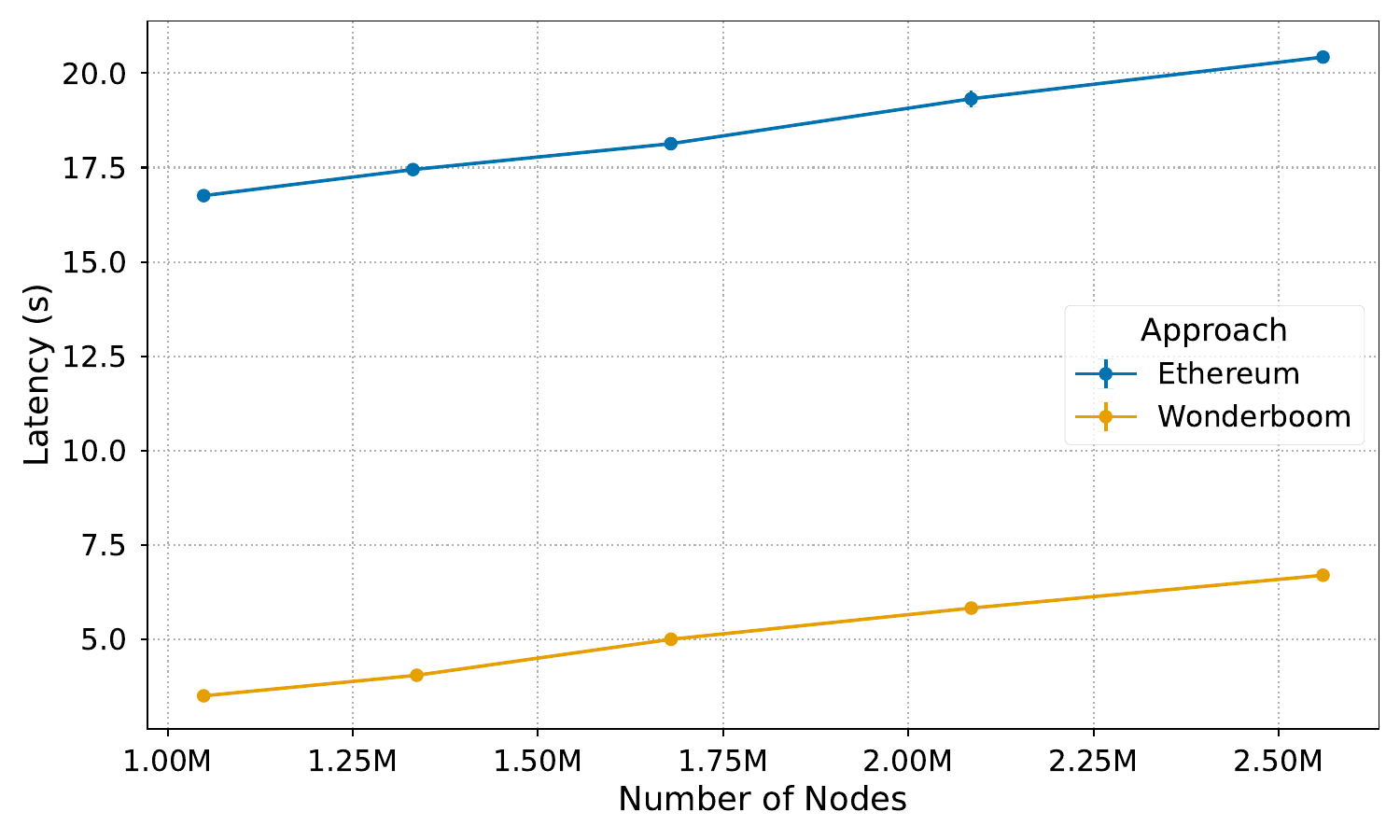}
        \caption{\sys vs Ethereum}
        \label{fig:ethwonder}
    \end{figure}%

\smallskip\noindent{\bf Worst Case.}
Figure~\ref{fig:architecture_nodes} shows how different configurations of \sys perform under worst-case conditions. The y-axis represents latency in seconds with a cut-off at 4 seconds, while the x-axis shows the number of nodes alongside the available cores. The number of nodes ranges from 1 million to 2.5 million (fanout 256 to 320). For the current Ethereum configuration of about 1 million validators, all $N$ signatures can be aggregated in under 4 seconds with only 4 cores, which matches the minimum Ethereum node requirement. For larger configurations (1.5 to 2 million nodes), aggregation remains possible with 8 cores, only at 2.5 million nodes, aggregation takes a little bit longer than 4 seconds.. These results indicate that \sys can scale Ethereum to much larger validator sets while still guaranteeing two-slot consensus, even under worst-case conditions.

\smallskip\noindent{\bf Varying Participation Rate.}\label{sec:varying}
Next, we evaluate how the runtime of \sys behaves under higher participation rates. We still assume worst-case verification and networking costs at each rate, omitting all optimizations except for optimizing the public key aggregation based on the higher participation rate as described in Section~\ref{sec:optimizations}.

Figure~\ref{fig:architecture_cores} illustrates how runtime and node requirements change as participation increases. The y-axis shows latency in seconds, while the x-axis shows the number of cores. We simulated two validator set sizes: 1 million (blue) and 2.5 million (orange). The solid line indicates the $\frac{2}{3}$ minimum participation threshold, while the dashed and dotted lines represent 90\% and 99\% participation respectively (reflecting current Ethereum conditions~\cite{cobra}).

Our results show that optimizing public key aggregation substantially reduces system load, confirming that public key aggregation is the dominant performance factor. At 90\% participation, \sys can handle 2.5 million nodes with 4 cores and 1 million signatures with 2 cores. At 99\% participation, both configurations complete within 4 seconds using only 2 cores.
This indicates that even legacy Ethereum nodes with weaker node hardware will be able to support two slot consensus on the Ethereum blockchain with the help of \sys.

\smallskip\noindent{\bf Wonderboom vs Ethereum.}\label{sec:evalcomparison}
Figure~\ref{fig:ethwonder} compares \sys to the current Ethereum design, with the number of nodes on the x-axis and latency in seconds on the y-axis. Both configurations were run on 4 cores and with the worst-case participation rate, i.e., liveness-critical conditions.

First, we observe that \sys substantially outperforms Ethereum, requiring less than a third of the time in all configurations validating our design and demonstrating the necessity of \sys to achieve shorter finality, particularly at larger scales.

Note that this simulation of the Ethereum aggregation protocol also requires point to point channels between validators. However, in practice, in Ethereum all communication goes through the Gossip layer where each message takes around 4 seconds to reach all validators~\cite{ethereum4sresearch}. One of the main reasons for this large delay is the large number of individual signatures (over 32.000) that are disseminated all at once through the gossip layer.
Therefore, in practice, Ethereum is not only limited by the performance of the aggregation protocol but also by the performance of the underlying gossip protocol. As a result, while the aggregation protocol is already significantly faster compared to the current protocol in Ethereum, by using point to point channels instead of the gossip network for signature dissemination, we can aggregate all $N$ signatures over 32x faster than the current Ethereum protocol.

%Finally, it is important to note that in this example, communication between nodes in the shallow tree was peer-to-peer rather than through gossip. In Ethereum, tree communication relies on gossip, which substantially limits scalability. At the time of writing, the gossip layer already introduces significant delays, which would only worsen if more nodes attempted to broadcast their signatures each round.}

\smallskip\noindent{\bf Summary.} Our evaluation highlights the advantages of \sys over Ethereum's current approach. We demonstrate that \sys can aggregate signatures for large validator sets within a single Ethereum slot, even under worst-case conditions with up to $\frac{1}{3}$ Byzantine nodes, and that during periods of higher participation this can be achieved even on legacy hardware. Furthermore, we show that \sys exhibits fundamentally better scalability properties than the existing protocol, making it a strong candidate as Ethereum plans to lower the minimum deposit requirement, which could substantially expand the validator set. Consequently, \sys not only enables significantly shorter finality latency but also enables operation at much larger scales.

% blah
\section{Discussion}
\label{sec:discussion}
In this section, we discuss practical considerations, design trade-offs, and optimizations of \sys, and their implications for performance, security, and deployability.

\smallskip\noindent{\bf Random Aggregate Overhead.}
In \sys, to achieve vote-censorship resilience, aggregators submit two aggregates, the largest aggregate and a random aggregate.  
This doubles the bandwidth and metadata storage requirements but is primarily only necessary in the worst-case scenario when the system is under attack.  
In practice, given the high participation rate of Ethereum, the random aggregate will usually also match the largest aggregate.  
In such cases, the aggregator can submit only a single aggregate, bypassing this overhead.

\smallskip\noindent{\bf Roaring Bitmap.}
Unlike Ethereum, \sys aggregates signatures from all $N$ validators every round, requiring each block to include a bitset of all validators and increasing metadata overhead by a factor of 32. However, due to Ethereum's high participation rate, these bitsets can be efficiently compressed, for example using Roaring Bitmaps~\cite{roaring}.

\smallskip\noindent{\bf Separate Validator/Proposer IP.}
As discussed in Section~\ref{sec:wonderboom}, we propose physically separating the validator and proposer processes by assigning them distinct IP addresses. The key motivation for this architectural decision is that it enables validator IP addresses to be publicly known while still protecting the proposer from censorship attacks.

In the current Ethereum design, the proposer is protected as all communication passes through the Ethereum gossip layer, which aims to obfuscate the proposer's identity. However, prior work has shown that because validators interact with the gossip layer at least once per epoch, their IP addresses can be deanonymized, which in turn also exposes the proposer~\cite{heimbach2024deanonymizing}.

As such, our design enables direct TCP communication between validators, bypassing gossip-layer latency, while also reducing proposer exposure to known deanonymization attacks. Because the proposer interacts with the gossip network only rarely in this architecture, these attacks become significantly harder to carry out in practice. Additionally, reducing gossip traffic can lower the overall network load, which could help mitigate the observed tail latency of up to 4 seconds.

%The motivation is to protect the proposer from censorship attacks, which, e.g., could be incentivized by MEV opportunities. In the current Ethereum design, the proposer's identity can be uncovered over time, as attackers can gradually deanonymize validator IP addresses~\cite{heimbach2024deanonymizing}. By assigning a separate IP to the proposer, the deanonymization process becomes significantly harder, as the proposer only interacts infrequently with the gossip layer.

A similar separation of roles exists in blockchains like Aptos~\cite{aptos}, where validators are encouraged to operate a full node alongside the validator to handle client requests and shield the validator from direct exposure. However, this approach incurs a substantial increase in hosting and operational costs.

In contrast, our design aims to minimize both cost and complexity. Rather than requiring a separate full node, the validator may be implemented as a lightweight gateway with a dedicated IP address that receives messages from other validators and forwards them to the proposer. This gateway-based architecture is a common design pattern and was, for example, widely used by mining pools in PoW Ethereum~\cite{gateways}.

%IPv6 addresses are inexpensive, and validators can reserve additional addresses to separate proposer and validator processes at negligible financial overhead. The validator process in our model would not hold keys or perform heavy computation. Instead, it would act as a lightweight forwarding service, configuring firewall rules to accept connections only from the validators relevant in the current and next round. It would then proxy valid requests to the proposer process and relay responses back to the network.
%This design reduces the proposer's exposure to attacks compared to Ethereum's current model, while enabling \sys to leverage direct point-to-point communication between validators. This also reduces the load on Ethereum's gossip network, further improving the protocol latency.

\smallskip\noindent{\bf Free Riding.} In \sys, each validator must participate at least once every $k$ slots to remain eligible for rewards. However, this also implies that validators may remain idle for up to $k - 1$ slots without immediate penalty. 
As we show in Section~\ref{sec:analysis}, for $k = 2 \times 64$ (two epochs), we achieve similar censorship resilience to that of six epochs in Ethereum. Since Ethereum currently requires only one participation per epoch, this highlights a fundamental trade-off between censorship resilience and resistance to free-riding behavior.

\ray{
However, as shown in~\cite{badslashing}, this trade-off is inherent to Proof-of-Stake blockchains, and slashing mechanisms intended to deter free riding can undermine fundamental protocol properties.
}

%Consistent participation could be encouraged by adopting mechanisms such as those proposed in~\cite{cobra}, where participation rewards are distributed only if finality is achieved within two slots, and the total reward pool is scaled according to the overall participation rate. Thus, if only $\frac{2}{3}$ of the validators participate in a given slot, only $\frac{2}{3}$ of the rewards are issued. While this approach provides only a marginal additional incentives for active validators, we argue that the potential benefit from abstaining is limited, as the main operational costs of running a validator cannot be significantly reduced by sending attestations less frequently. A comprehensive analysis of these trade-offs is left for future work.

\smallskip\noindent{\bf Aggregate-Verify.}
After public key aggregation, the next largest cost factors in \sys is that each aggregator must verify $m$ BLS signatures individually.  
However, aggregators in \sys could wait until receiving a sufficient number of signatures, aggregate them, and then verify only the aggregate.  
This would substantially reduce the average-case verification cost.  

However, in the worst case, when up to $\approx \frac{1}{3}$ of nodes are faulty, aggregate verification may fail due to a single invalid signature, forcing a fallback to sequential verification.  
This creates a trade-off: improved average-case complexity at the expense of worse worst-case complexity.  
We omit this optimization when presenting our design, as our goal is to ensure that the protocol can aggregate millions of signatures within a single slot even in the worst case.

\smallskip\noindent{\bf Identifying Tree Configurations Efficiently.}
% We identify two main approaches to find the best possible trees for different system configurations. First, we can approach it with a greedy algorithm, as, even though there are $O(N^2)$ combinations of $N$ and $m$, in practice, this only has to be calculated once every time the minimum system requirements change, and the best configurations for different $N$ can then be hardcoded in the validator software. Furthermore, we only have to calculate practical values for $m$ and $N$, e.g., $m$ can be limited to $\log N < m < \sqrt N$ and $N$ only incremented in $\sqrt N$ steps.
% Second, we can also use linear programming to determine the most efficient tree. 
% This computation only needs to be performed once, and thus, we defer the development of more optimized algorithms to future work.
Determining the optimal tree structure is central to ensuring both scalability and robustness. We outline two complementary approaches to selecting optimal tree parameters across different system configurations. A straightforward option is a greedy search: although the number of candidate configurations is quadratic in $N$ and $m$, in practice this computation is required only once whenever minimum system requirements are updated. The resulting optimal configurations for different $N$ can then be hardcoded into validator software. Moreover, the search space can be restricted to practically relevant ranges, such as $\log N < m < \sqrt{N}$, with $N$ varied in increments of $\sqrt{N}$.  

A second approach is to formulate the problem as a linear program, allowing the most efficient tree structure to be computed directly. Since this optimization must be carried out only once, we defer the development of more sophisticated or adaptive algorithms to future work.

\smallskip\noindent{\bf Geographic Self-Selection.}
\label{app:geo}  
In both \sys and Ethereum, validators are randomly assigned to roles and positions in the tree. Consequently, the time required for leaf aggregators to collect signatures depends on the ``{global}'' latency.  
One natural idea to reduce this latency is to cluster validators so that leaf nodes and their aggregators are geographically close, thereby minimizing network delay. While this would allow even poorly connected validators to participate more reliably within tight slot deadlines, it introduces two significant drawbacks. First, it requires knowledge of network latencies between individual validators, which complicates protocol design. Second, and more critically, it makes the system more vulnerable to targeted attacks: an adversary could concentrate efforts on a single geographic cluster instead of the entire network.  

We therefore propose \emph{geographic self-selection} as a lightweight alternative. In its simplest form, each validator chooses an identifier, and the protocol batches validators that share the same identifier. Through social consensus, identifiers can be associated with geographic regions (e.g., continent, country, city), but the protocol itself need not interpret their meaning. More flexibility can be achieved through ranked or hierarchical identifiers (e.g., continent at the top level, then country, then region), enabling batching at the lowest level with sufficient participants.  

This design, however, comes with censorship risks: an attacker could register multiple validators under the same identifier as a target, thereby controlling a large fraction, or even all, of the aggregators in that cluster. To mitigate this, validators could be reassigned to random positions if they fail to participate for several consecutive rounds, and allowed to change identifiers frequently if their current choice proves disadvantageous. These safeguards reduce the risk of persistent censorship but increase the security parameter $k$ required to ensure strong vote-censorship resilience.

\section{Related Work}
\label{sec:relatedwork}

There have been numerous works investigating eclipse attacks on the Ethereum network, e.g.,~\cite{eclipse1,eclipse2}. To the best of our knowledge, these works have primarily focused on the properties of the underlying gossip network rather than on Ethereum's signature aggregation protocol.

In contrast, tree-based vote aggregation has been a long-standing technique in consensus protocols, where it has been used to reduce computational and bandwidth overhead by distributing the aggregation process among participants. An example is Byzcoin~\cite{byzcoin} which distributes computational and bandwidth costs across nodes to accelerate large-scale consensus. 
%Early designs were vulnerable when faced with Byzantine validators. 
Later, systems such as Omniledger~\cite{omniledger} introduced mechanisms to tolerate Byzantine faults within the tree structure. Kauri~\cite{kauri} went further, proposing a reconfiguration algorithm that finds a robust tree in linear time, ensuring consensus despite faulty participants. More recent work~\cite{CTT} strengthens these resilience guarantees by introducing redundancy into the tree: instead of a single internal node forwarding leaf votes, two internal nodes forward them.
However, these approaches are optimized for networks with at most thousands of validators. Applied to ecosystems such as Ethereum, with over a million validators, they would open the door for censorship, eclipse and liveness attacks.

The scale of Ethereum has motivated designs such as SFF~\cite{sff} and FFG~\cite{ffg}, which propose requiring clients to verify and aggregate millions of consensus messages to detect finality.
This not only puts a large computational load on the clients, but also floods the Ethereum gossip layer with millions of messages.

In a 2022 blog post, Vitalik proposed several approaches to move Ethereum towards single-slot finality, one of which was adopted in the Pectra upgrade~\cite{pectra}, aiming to reduce the effective validator set by consolidating validators through weighted voting. In theory, this could shrink the active validator set to allow single slot vote aggregation. However, despite the introduction of weighted voting, the number of validators has only slightly decreased~\cite{beaconchain}. Moreover, if Ethereum lowers the entry barrier for becoming a validator as proposed, the validator set would grow further.
Another proposal was to use super-committees to finalize blocks. However, this reduces the economic safety of the protocol, as a smaller number of validators, with a smaller combined stake, would be responsible for finality~\cite{cobra}.

Similar vulnerabilities arise in quorum-based protocols that achieve scalability by sub-sampling the validator set and running consensus on small committees, such as Algorand~\cite{algorand} and Avalanche~\cite{avalanche}.
% Examples include Algorand~\cite{algorand}, which uses cryptographic sortition, and Avalanche~\cite{avalanche}, a gossip-based protocol with sub-sampled voting. 
% However, these approaches trade scalability for reduced economic safety. In Ethereum, successful attacks require control over a large fraction of the total stake, which can subsequently be slashed and potentially used to compensate for economic damage~\cite{cobra}. In contrast, committee-based protocols can only hold the misbehaving committee members accountable, whose combined stake is typically negligible.
These approaches reduce per-round communication and latency by limiting the number of participating validators, but in doing so, 
% decouple safety from the total stake securing the system. As a result, 
accountability is restricted to the selected committee, whose combined stake is typically a small fraction of the total, reducing the economic cost of attacks.
On the contrary, successful safety violations in Ethereum require control of a large fraction of the total stake, which can subsequently be slashed and potentially used to compensate for economic damage~\cite{cobra}. 

Taken together, existing approaches either sacrifice economic safety or fail to scale aggregation to Ethereum’s validator set.
% In this work, we introduce \sys, moving Ethereum towards 2-slot finality by aggregating the signatures of all correct validators within a single slot. Notably, \sys achieves this without reducing the validator set and without compromising security. 
We introduce \sys, which enables 2-slot finality by aggregating the signatures of all correct validators within a single slot, without reducing the validator set or compromising security.
In fact, \sys provides stronger vote-censorship resilience than the current state-of-the-art in Ethereum.
\section{Conclusion}\label{sec:conclusion}
% This paper introduced \sys, a tree-based signature aggregation protocol for large-scale consensus. 
% We evaluated its worst-case performance and demonstrated that \sys can aggregate signatures in a single slot for the current Ethereum validator set on legacy validator hardware (e.g. 4 cores), and even scale to 2 million validators with the current minimum hardware requirements. 
% Unlike existing protocols, \sys achieves this without introducing security trade-offs and provides higher censorship resilience guarantees than Ethereum.

This paper introduced \sys, a tree-based signature aggregation protocol for large-scale consensus. 
We evaluated its worst-case performance and demonstrated that \sys can aggregate signatures within a single slot for the current Ethereum validator set on legacy validator hardware (e.g., 4 cores), while scaling to validator sets of up to 2 million nodes under the same hardware assumptions. 
Unlike existing approaches, \sys achieves single-slot aggregation without reducing the validator set or introducing security trade-offs, and provides stronger vote-censorship resilience guarantees than Ethereum’s current design. 
Together, these results show that fast finality and strong inclusion guarantees are compatible with million-validator consensus.

\newpage

\appendix
\section{Acknowledgments}

The work was partially supported by the Austrian Science Fund (FWF) through the SFB SpyCode project F8509-N and F8512-N, and by the WWTF through the projects 10.47379/ICT22045 and 10.47379/ICT25056.
\section{Open Science}

The artifacts are available at: \url{https://anonymous.4open.science/r/Wonderboom-main} for the Wonderboom aggregation protocol and the instructions to reproduce our results. Furthermore, at \url{https://anonymous.4open.science/r/Wonderboom-ethereum} is our implementation of the Ethereum aggregation protocol we used to compare the two approaches.

\section{Ethical Considerations}

This paper discusses two vulnerabilities in Ethereum that have already been identified in prior work. First, the potential to deanonymize validators at the gossip layer was demonstrated in~\cite{heimbach2024deanonymizing}. Second, the security implications of slashing non-attributable misbehavior were analyzed in~\cite{badslashing}.

Consequently, this work does not disclose any new vulnerabilities. Instead, we assess the potential economic impact of previously known attacks and analyze the probability of their success under different adversarial capabilities. Our analysis is purely evaluative and does not introduce new attack techniques or lower the barrier to executing existing ones. Notably, the attacks discussed require an adversary to control assets valued in the tens of billions of dollars.

We carefully evaluated the ethical implications of this study and took steps to ensure responsible handling of sensitive information. In particular, we avoid operational details that could facilitate exploitation and focus on high-level modeling and analysis. Our goal is to inform protocol designers and the research community about systemic risks and to discuss mitigation strategies, rather than to enable real-world attacks.

As such, our research aims to protect existing stakeholders by preventing unjust penalization caused by protocol-level failures or attacks. Furthermore, our evaluation does not involve active experimentation on the Ethereum mainnet.

In detail:

\textbf{Stakeholders:} The primary stakeholders impacted by this research are users and validators on the Ethereum network, as well as companies and developers in the broader Ethereum ecosystem.

\textbf{Potential Harms:} Our analysis could inform adversaries about the feasibility of known attacks and could negatively impact the reputation of Ethereum, potentially translating into financial harm for users and validators.

\textbf{Mitigation:} We avoid providing operational details and focus on high-level analysis. Additionally, we propose protocol-level changes that address the identified risks and aim to improve the Ethereum ecosystem for all stakeholders.

\textbf{Ethical Principles and Decision:} As the outlined attacks are already known, we judge that proceeding with the research and its publication is ethically justified. The potential benefits for stakeholders and the importance of the long-term security of the Ethereum network outweigh the potential downsides.

\bibliographystyle{plainurl}
\bibliography{main}

\begin{thebibliography}{10}

\bibitem{randao}
Randao.
\newblock \url{https://github.com/randao/randao}.
\newblock Accessed on 26.08.2025.

\bibitem{aptos}
Aptos.
\newblock Run a validator and vfn.
\newblock \url{https://aptos.dev/network/nodes/validator-node}, 2025.
\newblock Accessed on 11.08.2025.

\bibitem{ffg}
Aditya Asgaonkar, Francesco D'Amato, Roberto Saltini, Luca Zanolini, and Chenyi Zhang.
\newblock A fast confirmation rule for the ethereum consensus protocol.
\newblock {\em arXiv preprint arXiv:2405.00549}, 2024.

\bibitem{cobra}
Zeta Avarikioti, Eleftherios~Kokoris Kogias, Ray Neiheiser, and Christos Stefo.
\newblock Cobra: A universal strategyproof confirmation protocol for quorum-based proof-of-stake blockchains, 2025.
\newblock URL: \url{https://arxiv.org/abs/2503.16783}, \href {https://arxiv.org/abs/2503.16783} {\path{arXiv:2503.16783}}.

\bibitem{beaconchain}
Beaconcha.in.
\newblock History of daily active validators.
\newblock \url{https://beaconcha.in/charts/validators}, 2025.
\newblock Accessed on 11.08.2025.

\bibitem{beaconchainentities}
Beaconcha.in.
\newblock Ethereum staking ecosystem overview.
\newblock \url{https://beaconcha.in/entities}, 2026.
\newblock Accessed on 11.08.2025.

\bibitem{bls2}
Dan Boneh, Manu Drijvers, and Gregory Neven.
\newblock Compact multi-signatures for smaller blockchains.
\newblock In Thomas Peyrin and Steven Galbraith, editors, {\em Advances in Cryptology -- ASIACRYPT 2018}, pages 435--464, Cham, 2018. Springer International Publishing.

\bibitem{bls}
Dan Boneh, Ben Lynn, and Hovav Shacham.
\newblock Short signatures from the weil pairing.
\newblock {\em Journal of cryptology}, 17(4):297--319, 2004.

\bibitem{10.1145/3670865.3673548}
Eric Budish, Andrew Lewis-Pye, and Tim Roughgarden.
\newblock The economic limits of permissionless consensus.
\newblock In {\em Proceedings of the 25th ACM Conference on Economics and Computation}, EC '24, page 704–731, New York, NY, USA, 2024. Association for Computing Machinery.
\newblock \href {https://doi.org/10.1145/3670865.3673548} {\path{doi:10.1145/3670865.3673548}}.

\bibitem{singleslotpath}
Vitalik Buterin.
\newblock Paths toward single-slot finality.
\newblock \url{https://notes.ethereum.org/@vbuterin/single_slot_finality}, 2022.
\newblock Accessed on 11.08.2025.

\bibitem{pbft}
Miguel Castro, Barbara Liskov, et~al.
\newblock Practical byzantine fault tolerance.
\newblock In {\em OsDI}, volume~99, pages 173--186, 1999.

\bibitem{algorand}
Jing Chen and Silvio Micali.
\newblock Algorand: A secure and efficient distributed ledger.
\newblock {\em Theoretical Computer Science}, 777:155--183, 2019.
\newblock In memory of Maurice Nivat, a founding father of Theoretical Computer Science - Part I.
\newblock URL: \url{https://www.sciencedirect.com/science/article/pii/S030439751930091X}, \href {https://doi.org/10.1016/j.tcs.2019.02.001} {\path{doi:10.1016/j.tcs.2019.02.001}}.

\bibitem{CohenP23}
Bram Cohen and Krzysztof Pietrzak.
\newblock Chia greenpaper, 2023.
\newblock URL: \url{https://docs.chia.net/green-paper-abstract}.

\bibitem{sff}
Francesco D’Amato and Luca Zanolini.
\newblock A simple single slot finality protocol for ethereum.
\newblock In {\em European Symposium on Research in Computer Security}, pages 376--393. Springer, 2023.

\bibitem{ethbls}
Ben Edgington.
\newblock Upgrading ethereum, bls signatures.
\newblock \url{https://eth2book.info/latest/part2/building_blocks/signatures/}, 2025.
\newblock Accessed on 30.07.2025.

\bibitem{ethereumbookcommittees}
Ben Edgington.
\newblock Upgrading ethereum, committees.
\newblock \url{https://eth2book.info/latest/part2/building_blocks/committees/}, 2025.
\newblock Accessed on 30.07.2025.

\bibitem{penalties}
Ethereum.
\newblock Sync committee penalties.
\newblock \url{https://eth2book.info/latest/part2/incentives/penalties/}, 2025.
\newblock Accessed on 21.08.2025.

\bibitem{feller1}
William Feller.
\newblock {\em An Introduction to Probability Theory and Its Applications}, volume~1.
\newblock Wiley, January 1968.
\newblock URL: \url{http://www.amazon.ca/exec/obidos/redirect?tag=citeulike04-20{\&}path=ASIN/0471257087}.

\bibitem{heimbach2024deanonymizing}
Lioba Heimbach, Yann Vonlanthen, Juan Villacis, Lucianna Kiffer, Roger Wattenhofer, et~al.
\newblock Deanonymizing ethereum validators: The p2p network has a privacy issue.
\newblock {\em arXiv preprint arXiv:2409.04366}, 2024.

\bibitem{DBLP:conf/podc/KeidarKNS21}
Idit Keidar, Eleftherios Kokoris{-}Kogias, Oded Naor, and Alexander Spiegelman.
\newblock All you need is {DAG}.
\newblock In {\em {PODC}}, pages 165--175. {ACM}, 2021.

\bibitem{byzcoin}
Eleftherios~Kokoris Kogias, Philipp Jovanovic, Nicolas Gailly, Ismail Khoffi, Linus Gasser, and Bryan Ford.
\newblock Enhancing bitcoin security and performance with strong consistency via collective signing.
\newblock In {\em 25th USENIX Security Symposium (USENIX Security 16)}, pages 279--296, Austin, TX, August 2016. USENIX Association.

\bibitem{omniledger}
Eleftherios Kokoris-Kogias, Philipp Jovanovic, Linus Gasser, Nicolas Gailly, Ewa Syta, and Bryan Ford.
\newblock Omniledger: A secure, scale-out, decentralized ledger via sharding.
\newblock In {\em 2018 IEEE Symposium on Security and Privacy (SP)}, pages 583--598, San Francisco, CA, USA, 2018. IEEE.

\bibitem{roaring}
Daniel Lemire, Gregory Ssi-Yan-Kai, and Owen Kaser.
\newblock Consistently faster and smaller compressed bitmaps with roaring.
\newblock {\em Software: Practice and Experience}, 46(11):1547--1569, 2016.

\bibitem{eclipse2}
Yuval Marcus, Ethan Heilman, and Sharon Goldberg.
\newblock Low-resource eclipse attacks on ethereum’s peer-to-peer network.
\newblock {\em IACR ePrint Cryptology Report}, 2020.

\bibitem{bitcoin}
Satoshi Nakamoto.
\newblock Bitcoin: A peer-to-peer electronic cash system.
\newblock 2008.

\bibitem{kauri}
Ray Neiheiser, Miguel Matos, and Lu\'{\i}s Rodrigues.
\newblock Kauri: Scalable bft consensus with pipelined tree-based dissemination and aggregation.
\newblock In {\em Proceedings of the ACM SIGOPS 28th Symposium on Operating Systems Principles}, SOSP '21, page 35–48, New York, NY, USA, 2021. Association for Computing Machinery.
\newblock \href {https://doi.org/10.1145/3477132.3483584} {\path{doi:10.1145/3477132.3483584}}.

\bibitem{largeststock}
World~Federation of~Exchanges.
\newblock Market statistics - february 2026.
\newblock \url{https://focus.world-exchanges.org/issue/february-2026/market-statistics}, 2026.
\newblock Accessed on 26.01.2026.

\bibitem{badslashing}
Ulysse Pavloff, Yackolley Amoussou-Guenou, and Sara Tucci-Piergiovanni.
\newblock Byzantine attacks exploiting penalties in ethereum pos.
\newblock In {\em 2024 54th Annual IEEE/IFIP International Conference on Dependable Systems and Networks (DSN)}, pages 53--65, 2024.
\newblock \href {https://doi.org/10.1109/DSN58291.2024.00020} {\path{doi:10.1109/DSN58291.2024.00020}}.

\bibitem{ethereum4sresearch}
Yiannis Psaras.
\newblock Gossipsub message propagation latency.
\newblock \url{https://ethresear.ch/t/gossipsub-message-propagation-latency}, 2024.
\newblock Accessed on 11.08.2025.

\bibitem{avalanche}
Team Rocket, Maofan Yin, Kevin Sekniqi, Robbert van Renesse, and Emin~Gün Sirer.
\newblock Scalable and probabilistic leaderless bft consensus through metastability, 2020.
\newblock URL: \url{https://arxiv.org/abs/1906.08936}, \href {https://arxiv.org/abs/1906.08936} {\path{arXiv:1906.08936}}.

\bibitem{netem}
HEMMINGER S.
\newblock Network emulation with netem.
\newblock https://cir.nii.ac.jp/crid/1572543024894323456, 2005.
\newblock Accessed on 18.04.2022.

\bibitem{figmentrewards}
Reza Sabernia.
\newblock Navigating rewards, risks, and attestation efficiency.
\newblock \url{https://figment.io/insights/strategies-for-ethereum-validators-navigating-rewards-risks-and-attestation-efficiency}, 2024.
\newblock Accessed on 30.07.2025.

\bibitem{gateways}
Paulo Silva, David Vavricka, João Barreto, and Miguel Matos.
\newblock Impact of geo-distribution and mining pools on blockchains: A study of ethereum.
\newblock In {\em 2020 50th Annual IEEE/IFIP International Conference on Dependable Systems and Networks (DSN)}, pages 245--252, 2020.
\newblock \href {https://doi.org/10.1109/DSN48063.2020.00041} {\path{doi:10.1109/DSN48063.2020.00041}}.

\bibitem{pectra}
Corwin Smith, Nicolas Consigny, Julio, nixo, Tim Beiko, Sam Calder-Mason, Mario Havel, and wackerow.
\newblock Pectra.
\newblock \url{https://ethereum.org/roadmap/pectra/}, 2025.
\newblock Accessed on 17.09.2025.

\bibitem{statista_crypto2025}
{Statista}.
\newblock Share of cryptocurrency owners in 53 countries and territories worldwide as of january 2025, 2025.
\newblock Accessed on 08.10.2025.
\newblock URL: \url{https://www.statista.com/forecasts/1452605/share-of-cryptocurrency-owners-in-selected-countries-worldwide}.

\bibitem{blst}
Supranational.
\newblock blst.
\newblock \url{https://github.com/supranational/blst}, 2025.
\newblock Accessed on 11.08.2025.

\bibitem{ethereumvalue}
tokenterminal.
\newblock Ecosystem total value locked.
\newblock \url{https://tokenterminal.com/explorer/projects/ethereum/ecosystem/ecosystem-tvl}, 2026.
\newblock Accessed on 26.01.2026.

\bibitem{ethereum}
Gavin Wood.
\newblock Ethereum: A secure decentralised generalised transaction ledger.
\newblock {\em Ethereum project yellow paper}, 151:1--32, 2014.
\newblock Accessed on 18.04.2022.
\newblock URL: \url{https://files.gitter.im/ethereum/yellowpaper/VIyt/Paper.pdf}.

\bibitem{eclipse1}
Guangquan Xu, Bingjiang Guo, Chunhua Su, Xi~Zheng, Kaitai Liang, Duncan~S. Wong, and Hao Wang.
\newblock Am i eclipsed? a smart detector of eclipse attacks for ethereum.
\newblock {\em Computers \& Security}, 88:101604, 2020.
\newblock URL: \url{https://www.sciencedirect.com/science/article/pii/S0167404818313798}, \href {https://doi.org/10.1016/j.cose.2019.101604} {\path{doi:10.1016/j.cose.2019.101604}}.

\bibitem{CTT}
Peiyun Zhang, Fuya Xu, Tianlin Huang, Haibin Zhu, and Qinglin Zhao.
\newblock Ctt: A three-layer tree consensus mechanism for consortium blockchains with enhanced security and reduced communication cost.
\newblock {\em IEEE Transactions on Industrial Informatics}, 21(6):4355--4366, 2025.
\newblock \href {https://doi.org/10.1109/TII.2025.3534426} {\path{doi:10.1109/TII.2025.3534426}}.

\end{thebibliography}

\section{Omitted Proofs and Analysis}\label{app:proofs}

\subsection{Proof of~\Cref{lem:internal}}\label{app:proof_internal}
\internal*

\begin{proof}
We first compute the probability of consistently picking a correct aggregate.
At the first internal node level (i.e., the internal nodes at depth $d-1$), each internal node receives a total of $16$ aggregates over the same set of leaf nodes. Of which $\frac{16}{3}$ are faulty and $16-\frac{16}{3}$ are correct. 
After removing the largest aggregate, we have 15 aggregates left.
As such, the probability to randomly choose a correct aggregate out of the 15 remaining aggregates is $\frac{16-\frac{16}{3}}{15}$.

At the subsequent internal node levels (i.e., at depths $d-2$ until depth 1), the correct nodes will, in the worst case, already have chosen the largest aggregate from a faulty aggregator, and as such, the probability to randomly pick an aggregate from a correct node is $\frac{2}{3}$.
Thus, at depth $d-2$ from the first internal aggregator to the proposer, the probability to consistently pick a correct aggregate by random over all $d-2$ depths is at least $\frac{2}{3}^{d-2}\cdot \frac{16-\frac{16}{3}}{15}$. 

Now we compute the probability that the proposer is correct.
As there is only a single proposer, there is only a  $\frac{2}{3}$ chance for the proposer to be correct. 
As such, the final probability for a leaf node to successfully be included in a single slot in the random aggregate is at least $\frac{2}{3}\cdot \frac{2}{3}^{d-2} \cdot \frac{16-\frac{16}{3}}{15}$. 
\end{proof}

\subsection{Analysis of \sys under slowly adaptive adversaries}\label{app:proof_adaptive}
In order to analyse the vote-censorship resilience of \sys under slowly adaptive adversaries, we first define a two-player, one-round, strategic game $\Gamma$ played by the adversary and the environment. 
Let us define a topological ordering of the nodes in \sys over all slots in the epoch, which we denote $T$. 
Note that we can always do so since \sys is directed and acyclic, and the slots are strictly sequential.
As the adversary can corrupt up to $f$ validators per epoch, the action space for the adversary is the set of all subsets of size $\leq f$ chosen from $[N]$.
The action space for the environment is a permutation $\pi: [N] \rightarrow [N]$, where $\{\pi(1), \dots, \pi(N)\}$ forms the resultant topological ordering or placement of the validators in the trees defined by \sys in the epoch.
The utility function of the adversary is $u_\mathcal{A}(a)$ given some action $a$ played by the adversary is $\alpha \cdot n_c$, where $n_c$ is the number of honest validators that the adversary managed to censor at the end of the epoch after playing action $a$ and $\alpha>0$ is some constant that represents the proportional increase in adversarial stake (over the honest) following a successful attack.
We denote the expected utility of the adversary when sampling an action from some distribution $\Delta$ by $\mathbb{E}_{a \leftarrow \Delta}[u_\mathcal{A}(a)]$.
The adversary and the environment both move simultaneously, and the move of the environment is always a random permutation over $[N]$ (i.e., each one of the $N!$ permutations over $[N]$ are selected with equal probability).
Let $\sigma$ be the adversarial strategy that chooses $f$ nodes uniformly at random from $N$ to corrupt. 
Formally, let $X$ denote the set of all subsets of size $f$ that can be drawn from $[N]$.
The strategy $\sigma$ samples a subset $x \leftarrow X$ uniformly at random.
We now show that $\sigma$ is optimal, i.e., no other strategy can increase the expected utility of the adversary. 

\begin{restatable}[]{lemma}{adap}\label{thm:adaptive}
Assuming the permutation over $[N]$ is perfectly random, $\sigma$ is optimal.
\end{restatable}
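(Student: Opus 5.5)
The idea is a symmetry (averaging) argument. Because the environment's move is a uniformly random permutation $\pi$ of $[N]$, the adversary's expected utility depends only on the \emph{size} of the corrupted set, not on which validators are in it. So every strategy of a given size earns the same payoff. Corrupting the maximum number $f$ is weakly best, which makes $\sigma$ optimal.

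First, fix any deterministic action $a \subseteq [N]$ with $|a| \le f$. Write its expected utility as
\[
U(a)=\mathbb{E}_{\pi}\bigl[u_\mathcal{A}(a,\pi)\bigr],
\]
where the expectation is over the uniform permutation $\pi$. By the definition of $\Gamma$, the number $n_c$ of censored honest validators depends only on which tree positions the corrupted validators occupy, together with the protocol's deterministic forwarding rules and the honest behaviour.

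Second, I will prove that $U(a)=U(a')$ whenever $|a|=|a'|$. Pick a bijection $\tau$ of $[N]$ with $\tau(a)=a'$. Placing $a$ under $\pi$ produces the same multiset of corrupted positions (up to relabelling of validator identities) as placing $a'$ under $\pi\circ\tau^{-1}$. The utility $\alpha\cdot n_c$ counts censored honest validators, and the protocol treats identities symmetrically. Hence $u_\mathcal{A}(a,\pi)=u_\mathcal{A}(a',\pi\circ\tau^{-1})$. Since $\pi\mapsto\pi\circ\tau^{-1}$ preserves the uniform distribution on permutations, taking expectations gives the claim.

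Third, any mixed strategy $\Delta$ is a convex combination of deterministic actions. Its expected utility $\mathbb{E}_{a\leftarrow\Delta}[U(a)]$ is therefore at most $\max_{s\le f} U_s$, where $U_s$ denotes the common value of $U$ over all sets of size $s$. It remains to show that $U_s$ is nondecreasing in $s$. I will do this by monotonicity: adding a corrupted validator never decreases $n_c$, because the adversary can instruct the extra node to behave honestly. Care is needed here, since that node is then no longer an honest validator that can itself be censored, so the argument must weigh this lost potential victim against the extra censorship power the node brings. If that comparison turns out to be delicate, I will fall back to proving optimality among actions of size exactly $f$ and note that this is the adversary's maximal budget. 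Granting monotonicity, $\sigma$ is a mixture of size-$f$ sets, so it attains $U_f=\max_s U_s$ and is optimal.

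I expect the second step to be the main obstacle. It requires making precise that the protocol's behaviour, including tie-breaking in ``largest aggregate'' selection and the random-aggregate choice, is invariant under relabelling validators. This holds if all tie-breaks and random choices are derived from tree positions rather than from validator indices; otherwise I would absorb index-dependent tie-breaking into the randomness of $\pi$.
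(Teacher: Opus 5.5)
Your argument is correct, modulo the monotonicity point you already flag, but it reaches the core step by a different route than the paper.

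\textbf{The paper's route.} It first shifts any mass on a set $a$ with $|a|<f$ to a size-$f$ superset of $a$. The only justification is the bare assertion that ``$n_c$ is monotone increasing in $f$''. This is your third step, and the paper does not address the lost-victim issue you raise. It then argues by contradiction: a strategy over $X$ beating $\sigma$ would yield an adversary breaking the randomness of $\pi$ with advantage $\beta/\mathrm{poly}(\lambda)$. It never says what that distinguisher actually tests.

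\textbf{Your route.} You replace this reduction with an explicit invariance: relabelling by $\tau$ sends $(a,\pi)$ to $(a',\pi\circ\tau^{-1})$ and preserves the uniform law, so $U(a)$ depends only on $|a|$. This buys three things. It is direct and information-theoretic. It surfaces the hypothesis the paper leaves implicit: forwarding, tie-breaking in largest-aggregate selection, and the random-aggregate choice must all be invariant under relabelling identities. And it yields the sharper fact that every distribution over $X$ attains the same utility $U_f$, so $\sigma$ is optimal but not uniquely so.

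\textbf{What the paper's framing buys.} If made rigorous, the distinguisher argument would also cover a merely pseudorandom placement, e.g.\ one derived from RANDAO. There exact symmetry is unavailable, and one can only bound $|U(a)-U(a')|$ by a distinguishing advantage.

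\textbf{Monotonicity.} Your caution is warranted. Write $U_s=(N-s)q_s$, where $q_s$ is the per-honest-validator censorship probability. Adding a corruption trades a potential victim for extra censoring power, and monotonicity genuinely fails at the top end (e.g.\ $U_N=0<U_{N-1}$). So for $s\le f$ it is a protocol-specific property rather than a consequence of the game. Your fallback, optimality among size-$f$ strategies, is exactly what either proof establishes without that assumption.
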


\begin{proof}
    We first show that always selecting $f$ validators to corrupt is optimal.
    Suppose we have a strategy $\sigma'$ with support over $X \cup a$ for some $a$ which is a subset of $[N]$ of size less than $f$.
    Let us define the set $A$ to be the set of subsets of $[N]$ of size $f$ such that each element of $A$ contains $a$.
    We now construct a second strategy $\sigma''$ that shifts the probability mass of $\sigma'$ on $a$ to some random element $a' \in A$.
    Since $n_c$ is monotone increasing in $f$, the utility of the adversary when playing $a'$ is never less than the utility of the adversary when playing $a$.
    Thus, the expected utility of the adversary under $\sigma''$ is never less than $\sigma'$. 

    Now we show by contradiction that there is no other strategy $\sigma'$ over $X$ that gives larger expected utility compared to $\sigma$.
    Suppose there is some strategy $\sigma'$ such that $\mathbb{E}_{a\leftarrow \sigma'}[u_{\mathcal{A}}(a)]  > \mathbb{E}_{a\leftarrow \sigma}[u_{\mathcal{A}}(a)]$. 
    This means there is some action $a$ in the supports of both $\sigma'$ and $\sigma$ such that given $a$, $u_{\mathcal{A}}(a) > u_{\mathcal{A}}(a)$. 
    Denote by $\beta >0$ the probability that this happens. 
    We can use $\sigma'$ to then construct a second adversary that breaks the perfect randomness of the permutation over $[N]$ with probability $\frac{\beta}{poly(\lambda)}$, where $\lambda$ is the security parameter of the original adversary.
    This contradicts the perfect randomness assumption. 
\end{proof}

\end{document}